\newtheorem{theorem}{Theorem}
\newtheorem{proof}{Proof}
\newtheorem{remark}{Remark}
\title{Generalized exceptional quantum walk search}
\author[1,2]{Meng Li}
\author[1,3*]{Yun Shang}
\affil[1]{Institute of Mathematics, Academy of Mathematics and Systems Science, Chinese Academy of Sciences, Beijing 100190, China}
\affil[2] {School of Mathematical Sciences, University of Chinese Academy of Sciences, Beijing 100049, China}
\affil[3]{NCMIS, MDIS, Academy of Mathematics and Systems Science, Chinese Academy of Sciences, Beijing, 100190, China}
\affil[*]{shangyun602@163.com}
\keywords{generalized exceptional configuration, exceptional configuration, coined quantum walk, quantum walk search}
\begin{abstract}
We mainly study exceptional configuration for coined quantum walk search. For searching on a two-dimensional grid by AKR algorithm, we find some new classes of exceptional configurations that cannot be found by the AKR algorithm effectively and the known diagonal configuration can be regarded as its special case. Meanwhile, we give two modified quantum walk models that can improve the success probability in the exceptional configurations by numerical simulation. Furthermore, we introduce the concept of generalized exceptional configuration and consider search by quantum walk on a cycle with Grover coin. We find that the most natural coin combination model $(G,-)$, where $G$ is a Grover diffusion transformation, is a generalized exceptional configuration when just searching one marked vertex on the cycle. In the end, we find generalized exceptional configuration has a different evolution of quantum coherence from exceptional configuration. These extend largely the range of exceptional configuration of quantum walk search in some sense.
\end{abstract}
\begin{document}

%\flushbottom
\maketitle

\section{Introduction}
Quantum walks, as a quantum analogue of classical random walks, have been a useful model in designing quantum algorithms for a variety of problems~\cite{ambainis2005coins, childs2003exponential, matrixproduct, elementdistinctness, yang2018quantum, Portugal2018qwsearch}. In many of those applications, quantum walks are used as a tool for search problems~\cite{matrixproduct, elementdistinctness, Portugal2018qwsearch}. Search is one of the major problems in computer science.
And it is also a central task in the field of quantum algorithms.
Grover search~\cite{groversearch} is one of the well-known quantum search algorithms. It could search an unordered database of $N$ items in $O(\sqrt{N})$ time, which yields a quadratic speedup compared with the corresponding classical algorithm.
It was used recursively by Aaronson and Ambainis~\cite{aaronson2003quantum} for searching in grids. Shenvi, Kempe and Whaley~\cite{shenvi2003quantum} pointed out the potential of discrete time quantum walks with respect to searching problems and designed a quantum walk based simulation of Grover search.

Many search problems may be regarded as the problem of finding one or many marked vertices
from search space.
Ambainis, Kempe and Rivosh \cite{ambainis2005coins} designed a $(G,-I)$-Type quantum walk to search one or two marked vertices on a $\sqrt{N} \times \sqrt{N}$ grid by using the discrete time quantum walks, where $G$ represents the Grover's diffusion transformation, which is called AKR algorithm. The success probability of finding the marked vertices can reach $O(\frac{1}{\log N})$.
Szegedy~\cite{szegedy2004quantum} developed a theory of quantum walk based search algorithms by quantizing classical Markov chains. He generalized the number of marked vertex into arbitrary number.

Different from the performance of classical search algorithm, the quantum search algorithm may become harder when the number of marked vertices increases.
As for the AKR algorithm ~\cite{ambainis2005coins}, Ambainis and Rivosh~\cite{ambainis2008exceptional} showed an exceptional configuration of marked vertices where $N$ marked vertices are just on the diagonal of the $N\times N$ grid. This system only evolves by  flipping signs and stays in a uniform probability distribution for all time.
Thus, the success probability does not grow over time.
Wong and Santos~\cite{ThomasRM17} proved that one-dimensional cycle with any arrangement of marked vertices is an exceptional configuration under Szegedy's quantum walk~\cite{szegedy2004quantum} or its equivalent coined quantum walk, and constructed the higher-dimensional generalization, such as the two-dimensional grid with a marked diagonal.

In addition, Pr{\=u}sis et al.~\cite{Pr2016stationary} introduced the stationary states, i.e. 1-eigenvectors, of the quantum walk search operator for given graphs and configurations of marked vertices.
Nahimovs et al.~\cite{nahimovs2015exceptional, nahimovs2017adjacent} presented some configurations on two-dimensional grid and even general graph that have corresponding stationary states.
%{\color{red}In a narrow sense, the above configurations are not exceptional configurations unless the initial state is exactly stationary state.}
Recently, when the configurations have stationary states, the probability of finding marked vertices has also been analyzed~\cite{khadiev2018probability, glos2019upperbounds}.
Obviously, when the initial state (equal superposition state) happens to be the stationary state, the system remains unchanged all the time. Therefore, it is also an exceptional case which deserves our attentions.

In general, exploring the exceptional configuration or stationary state for a search algorithm can help us understand the algorithm more profoundly.
On the one hand, it helps us to refine the application scope of quantum algorithms and thus achieve the efficiency of quantum algorithms.
On the other hand, it may inspire us to design better search algorithm or explore other unknown applications based on the properties of exceptional configuration or stationary state.
For example, the algorithm for perfect matching in bipartite graph detecting using the property of stationary state~\cite{khadiev2018probability} has been considered.
Here, we will introduce a more general conception ``generalized exceptional configuration" for search problems (the initial state is always the equal superposition state), which means that, for an arrangement of marked vertices, the probability of success is always the same as their probability in the initial state no matter how many steps we take.
This concept includes more exceptional cases.
The exceptional configuration shown in Ref. \cite{ambainis2008exceptional} require the system must evolve by flipping signs and thus the success probability unchanged, so it belongs to generalized exceptional configuration.
And when the initial state happens to be the stationary state, the corresponding configuration is also a generalized exceptional configuration.

In this paper, we first give some new classes of exceptional configurations for AKR algorithms and provide two modified quantum walk models that can solve this exceptional search in some sense.
Then we introduce the concept of ``generalized exceptional configuration" by extending the range of exceptional configuration of quantum walk search and we find the success probability of the most natural $(G,-)$-Type coined quantum walk search algorithm will not grow over time when searching one marked vertex on the cycle.
Then we give a distinguish condition for the above two exceptional concepts by calculating their coherence.

The paper is organized as follows.
In Section 2, we introduce the models of quantum walk on the $N \times N$ grid and $N$-cycle, and provide the definitions of exceptional configuration and generalized exceptional configuration.
In Section 3, we discuss the exceptional cases of quantum walk search on the grid.
In Section 3.1, we present some new classes of exceptional configurations for AKR algorithm and we find that the known diagonal configuration is a special case. In Section 3.2, we give two modified quantum walk models that can improve the success probability in the exceptional configurations.
In Section 4, we consider the exceptional cases of quantum walk search on the cycle.
In Section 4.1, the generalized exceptional configuration about searching on the one-dimensional cycle based on the $(G,-)$-Type is discussed, where $G$ is Grover diffusion transformation. In Section 4.2, we give the strict difference between exceptional configuration and generalized exceptional configuration by an example-$(G,H)$-Type quantum walk, where $H$ is Hadamard matrix.
In Section 5,  we analyze the dynamics of quantum coherence in the exceptional configuration and generalized exceptional configuration which have been discussed above.
Then, we end with a summary in Section 6.

\section{Preliminaries}
For quantum walks on an arbitrary graph $G(V,E)$, they can be described by the repeated applications of an unitary operator $U=S\cdot C$ that acts on a Hilbert space $\mathcal{H}=\mathcal{H}^{c}\otimes\mathcal{H}^{p}$ where
$\mathcal{H}^{p}$ is the position Hilbert space spanned by the vertex $v\in V$ and $\mathcal{H}^{c}$ is the coin Hilbert space spanned by the edge $e\in E$.
$S$ is a shift operator that applies on the combined space, and $C=\widetilde{C}\otimes I$ is the coin transformation, in which $\widetilde{C}$ applies on the coin space.
So the basis states of the walker are $|j,v\rangle$ for $j\in \{0, 1 ,\cdots, d_{v}-1\}$, $v\in V$, and $d_{v}$ is the degree of vertex $v$. The state of the quantum walks is given by
\begin{equation}
|\psi\rangle=\sum_{j,v}a_{j,v}|j,v\rangle.
\end{equation}
If it comes to the search problem on a given graph, there are some marked vertices. The coin operator is $C=C_{0}\otimes (I-\Sigma_{v}|v\rangle\langle v|)+C_{1}\otimes \Sigma_{v}|v\rangle\langle v|$, where $|v\rangle$ represents the marked vertex. According to the choice of $C_{0}$ and $C_{1}$, we then call the corresponding quantum walk $(C_{0},C_{1})$-Type.
And the initial state of the whole system is the equal superposition state.
Next, we introduce two specific situations, the coined quantum walk on $N \times N$ grid and $N$-cycle, which will be discussed in detail later.

\textbf{Case 1 (coined quantum walk on the $N \times N$ grid)}
For quantum walks on a two-dimensional grid of size $N\times N$, the position space $\mathcal{H}^{p}$ is spanned by the basis states $|i,j\rangle$ for $i,j\in\{0,1,\ldots,N-1\}$, and the coin space $\mathcal{H}^{c}$ is spanned by basis states $\{|\uparrow\rangle,|\downarrow\rangle,|\leftarrow\rangle,|\rightarrow\rangle\}$.
So the basis states are $|d,i,j\rangle$ for $i,j\in \{0,1,\ldots,N-1\}$, $d\in \{\uparrow,\downarrow,\leftarrow,\rightarrow\}$, and the state of the system is given by
\begin{equation}
|\psi\rangle=\sum_{i,j}(a_{\uparrow,i,j}|\uparrow,i,j\rangle+a_{\downarrow,i,j}|\downarrow,i,j\rangle+a_{\leftarrow,i,j}|\leftarrow,i,j\rangle+a_{\rightarrow,i,j}|\rightarrow,i,j\rangle).
\end{equation}
For the marked vertices, Ambainis et al.~\cite{ambainis2005coins} apply $-I$ as the coin flip operator $C_{1}$.
In many search algorithm based on the discrete time coined quantum walk~\cite{ambainis2005coins, groversearch, shenvi2003quantum, nahimovs2015exceptional}, the nature choice is to apply Grover's diffusion transformation on the unmarked vertices as the coin flip transformation $C_{0}$.
The Grover's diffusion transformation is $G=2|\gamma\rangle\langle\gamma|-I$, where $|\gamma\rangle$ is the equal superposition state of the coin space. For the problem of searching on the $N\times N$ grid,
\begin{equation}
G=\frac{1}{2}
\left(
\begin{array}{cccc}
 -1 &  1  &  1  &  1 \\
 1  &  -1 &  1  &  1 \\
 1  &  1  &  -1 &  1 \\
 1  &  1  &  1  & -1
\end{array}
\right).
\end{equation}
The shift transformation $S$ is
\begin{equation}
\begin{array}{ccc}
 |\uparrow,i,j \rangle     &  \longmapsto  &  |\downarrow,i,j-1 \rangle  \\
 |\downarrow,i,j \rangle   &  \longmapsto  &  |\uparrow,i,j+1 \rangle    \\
 |\leftarrow,i,j \rangle   &  \longmapsto  &  |\rightarrow,i-1,j \rangle \\
 |\rightarrow,i,j\rangle  &  \longmapsto  &  |\leftarrow,i+1,j \rangle
\end{array}
\end{equation}
And the initial state is
\begin{equation} \label{gridinitial}
|\psi(0)\rangle=\frac{1}{\sqrt{4N^{2}}}\sum_{i,j}(|\uparrow,i,j \rangle +|\downarrow,i,j \rangle +|\leftarrow,i,j \rangle +|\rightarrow,i,j \rangle).
\end{equation}
Notice that the grid has periodic boundary conditions, i.e., the operations are modulo $N$. For example, the shift transformation $S$ maps $|\leftarrow,0,j \rangle$ to $|\rightarrow,N-1,j \rangle$, where $j\in \{0,1,\ldots,N-1\}$.

\textbf{Case 2 (coined quantum walk on the $N$-cycle)}
For quantum walks on one-dimensional cycle with $N$ vertices, the coin space $\mathcal{H}^{c}$ is spanned by the basis states $\{|0\rangle, |1\rangle\}$ which represent the direction of the walker moves, and the position space $\mathcal{H}^{p}$ is spanned by the basis states $\{|0\rangle, |1\rangle, \cdots, |N-1\rangle\}$.
The state of the system is described as
\begin{equation}
|\psi\rangle=\sum_{d=0}^{1}\sum_{x=0}^{N-1}a_{d,x}|d,x\rangle.
\end{equation}
In the two dimensional coin space, the Grover's diffusion transformation equals to the Pauli matrix $X$,
\begin{equation}
X=
\left(
  \begin{array}{cc}
    0 & 1 \\
    1 & 0
  \end{array}
\right).
\end{equation}
Also, the general unitary coin operator can be written as
\begin{equation}
Q=
\left(
\begin{array}{cc}
\sqrt{\rho} &  \sqrt{1-\rho}e^{i\theta} \\
\sqrt{1-\rho}e^{i\phi}  &  -\sqrt{\rho}e^{i(\theta +\phi)}
\end{array}
\right),
\end{equation}
where $\rho \in [0,1]$, $\theta \in [0,\pi]$ and $\phi \in [0,2\pi]$.
In particular, when $\rho=\frac{1}{2}$ and $\theta=\phi=0$, the matrix $Q$ is just the well-known Hadamard matrix $H$.
For the shift transformation S, there are two different operators which are often used in general.
\begin{equation}
S=|0\rangle\langle0|\otimes\sum_{x=0}^{N-1}|x-1\rangle\langle x|+|1\rangle\langle1|\otimes\sum_{x=0}^{N-1}|x+1\rangle\langle x|, \label{moving}
\end{equation}
\begin{equation}
S=|1\rangle\langle0|\otimes\sum_{x=0}^{N-1}|x-1\rangle\langle x|+|0\rangle\langle1|\otimes\sum_{x=0}^{N-1}|x+1\rangle\langle x|. \label{flipflop}
\end{equation}
The first is the moving shift operator, where a particle hops and keeps the direction the same.
And the second is the flip-flop shift operator, where the particle hops and changes the direction.
Note that the addition operation is defined modulo N.
The initial state of the search problem on the $N$-cycle is:
\begin{equation}
|\psi(0)\rangle=\frac{1}{\sqrt{2N}}\sum_{d=0}^{1}\sum_{x=0}^{N-1}|d,x\rangle.
\end{equation}

Now, we will list two concepts with respect to the success probability of finding marked vertices.

\textbf{Definition 1 (exceptional configuration)} For the search problem on a given undirected graph  under $(C_{0},C_{1})$-Type coined quantum walk, if there is an arrangement of marked vertices such that the system only evolves by flipping signs and thus the probability of success is always the same as their probability in the initial state over time, we call this kind of arrangement as an exceptional configuration corresponding to $(C_{0}, C_{1})$-Type quantum walk.

\textbf{Definition 2 (generalized exceptional configuration)} For the search problem on a given undirected graph under $(C_{0},C_{1})$-Type coined quantum walk, if there is an arrangement of marked vertices such that the success probability is always the same as their probability in the initial state over time, we call this kind of arrangement as a generalized exceptional configuration corresponding to $(C_{0},C_{1})$-Type quantum walk.

In this paper,
on one hand, we find some new classes of exceptional configurations of the marked vertices on the $N\times N$ grid, in which the known diagonal configuration presented in~\cite{ambainis2008exceptional} is just a special case.
To search the marked vertices with a higher probability in these exceptional configurations at some special steps, we provide two kinds of modified quantum walk models which can solve this problem successfully in terms of numerical simulation.
On the other hand, we further discuss the exceptional configurations and generalized exceptional configurations of the search problem on one-dimensional $N$-cycle with one marked vertex under some different types of coined quantum walks in detail.
Furthermore, we consider the evolution of quantum coherence in the two cases discussed above.

\section{Quantum walk search on the grid}
In this section, we introduce some new classes of exceptional configurations and two modified quantum walk models about searching on the two-dimensional $N \times N$ grid.

\subsection{The extension of the diagonal configuration}

So far now, the known exceptional configuration in the AKR algorithm is the diagonal case~\cite{ambainis2008exceptional}. However, we find that there exists a wider class of exceptional configurations apart from the diagonal case. In fact, when the coordinate of the marked vertex $(i,j)$ satisfies either $j-i=\alpha$ or $i+j=\alpha$ where $\alpha$ is an
arbitrary integer, the success probability does not change over time.
Moreover, the diagonal configuration is a special case where $j-i$ is equal to zero.
To understand the exceptional configurations, we give the example on $5\times 5$ grid shown in Figure \ref{55grid}, in which (a)-(e) represents the case when $j-i=\alpha$.
Note that the addition and subtraction in the position space are all calculated under the modulus $N$.
We prove it in the following theorem.

\begin{figure}[htb]
 \centering
 \subfigure[]{\label{}
 \includegraphics[width=3cm]{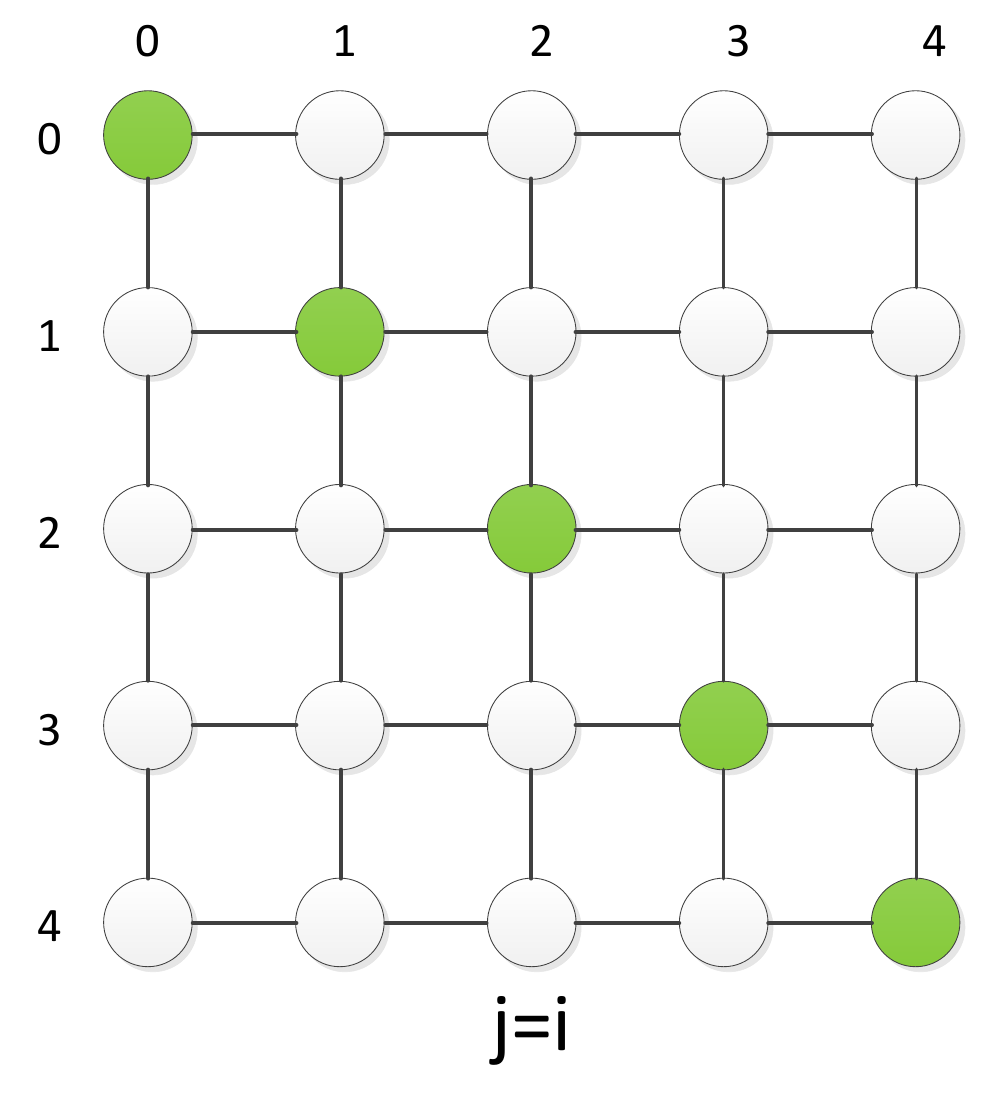}}
 \subfigure[]{\label{}
 \includegraphics[width=3cm]{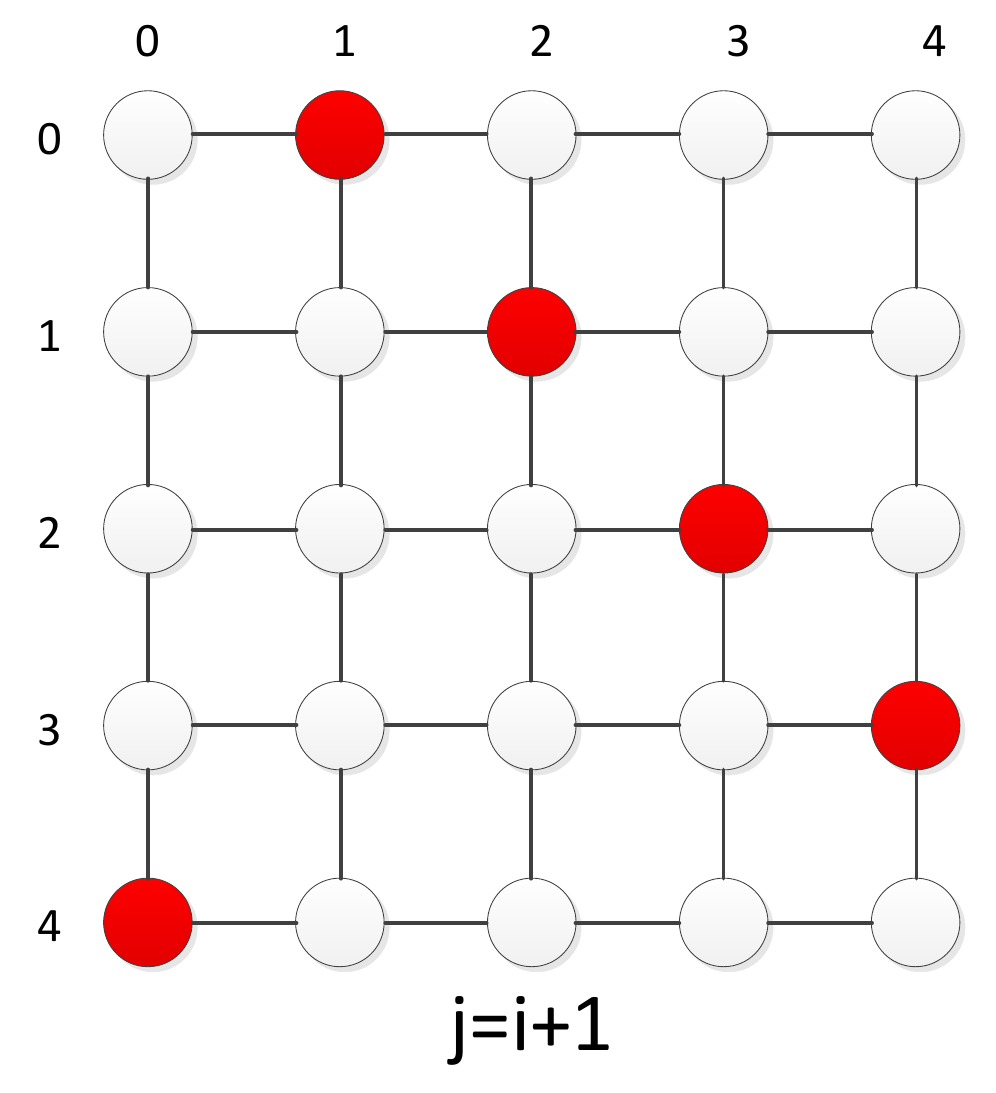}}
 \subfigure[]{\label{}
 \includegraphics[width=3cm]{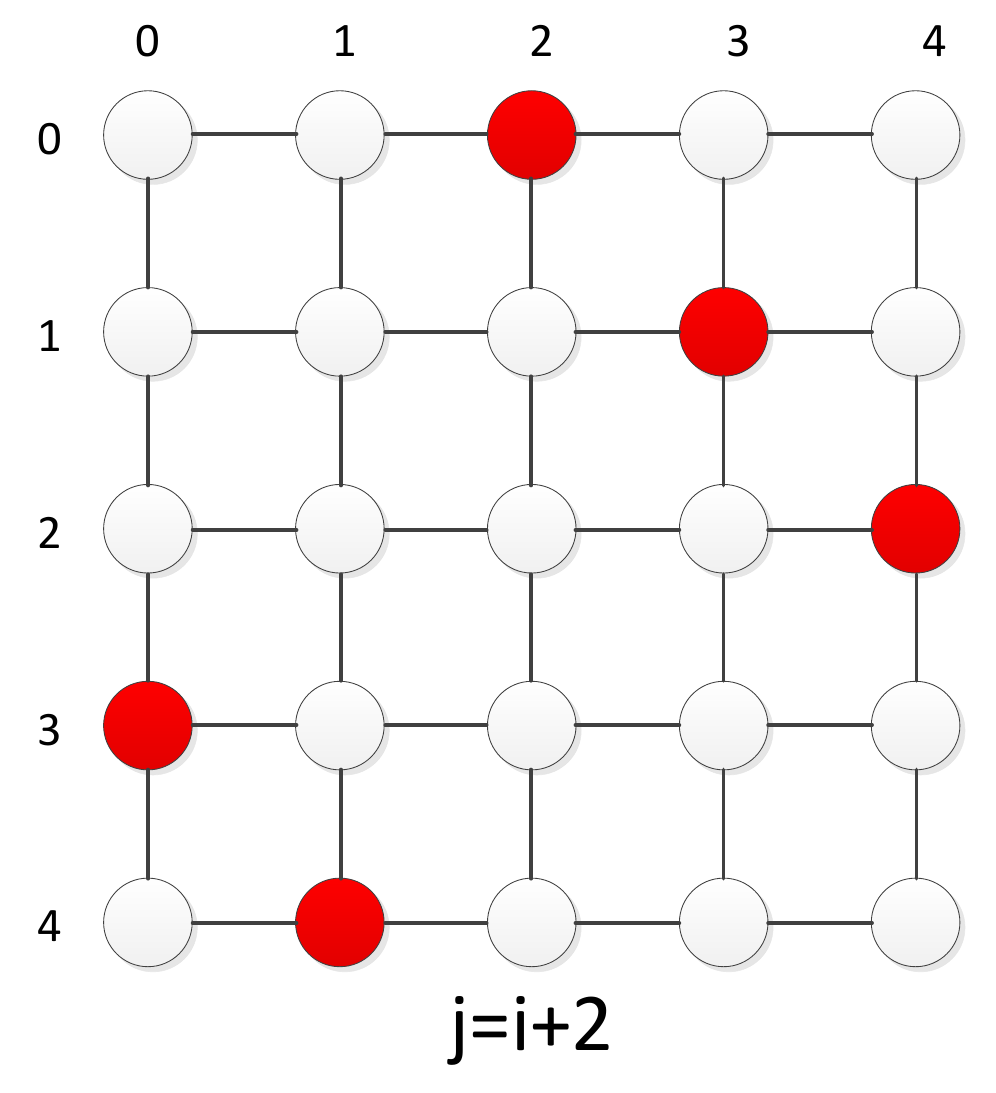}}
 \subfigure[]{\label{}
 \includegraphics[width=3cm]{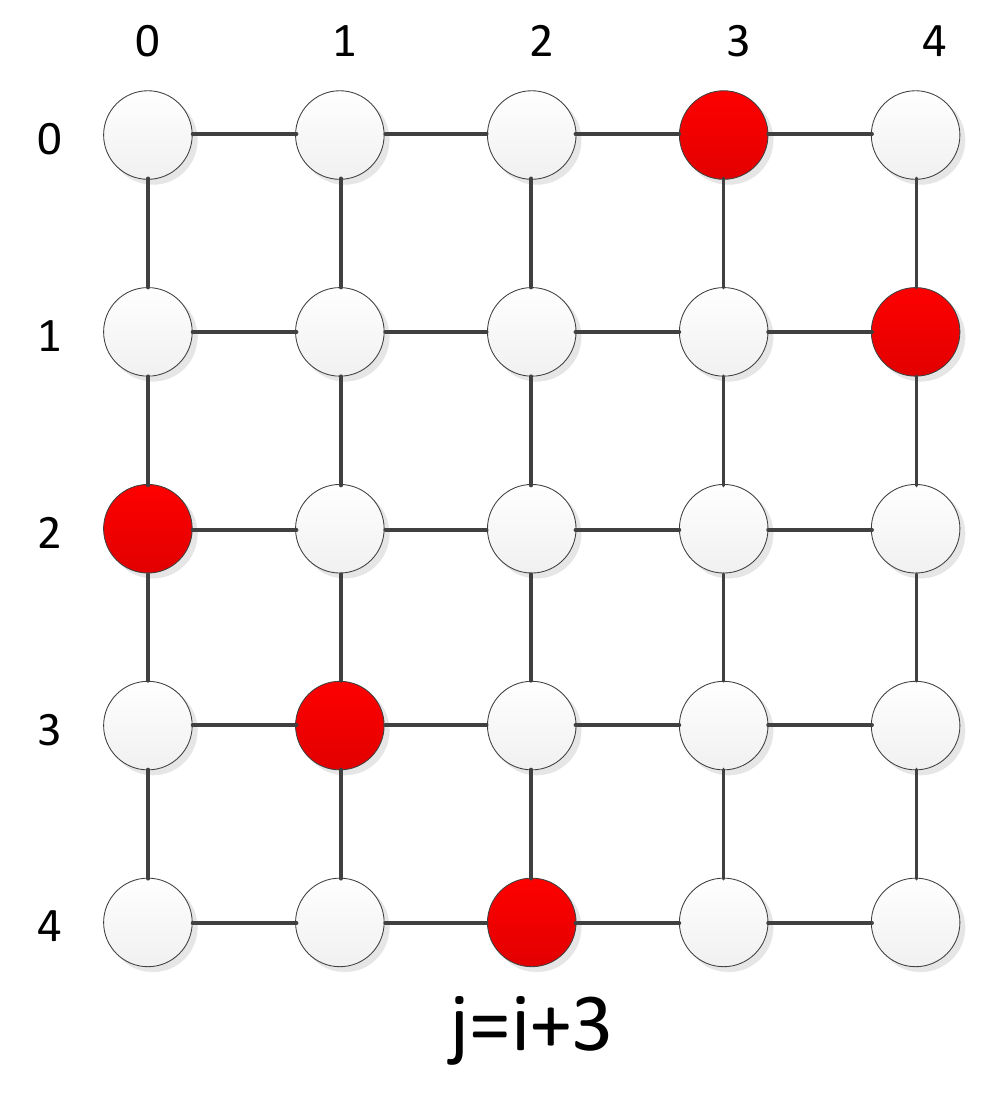}}
 \subfigure[]{\label{}
 \includegraphics[width=3cm]{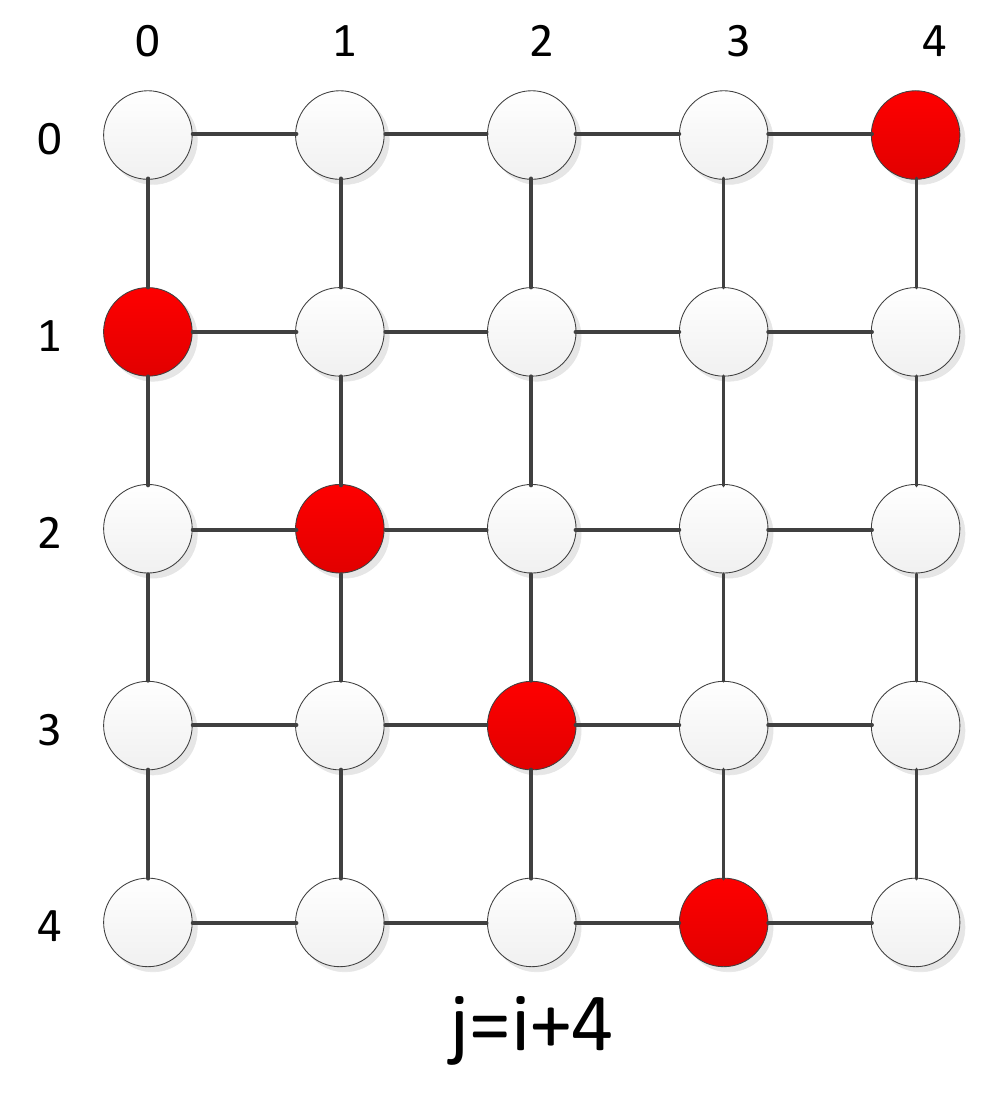}}
 \subfigure[]{\label{}
 \includegraphics[width=3cm]{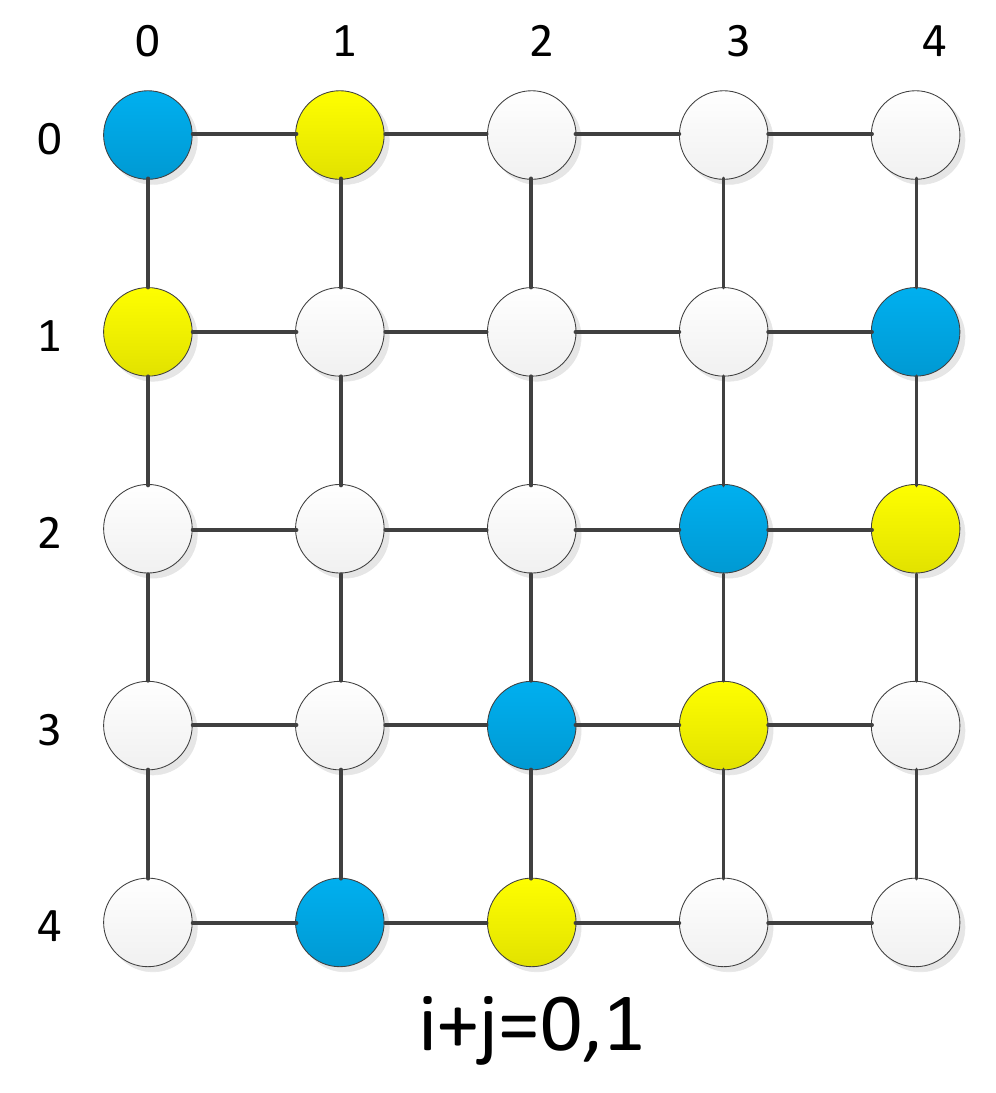}}
 \subfigure[]{\label{}
 \includegraphics[width=3cm]{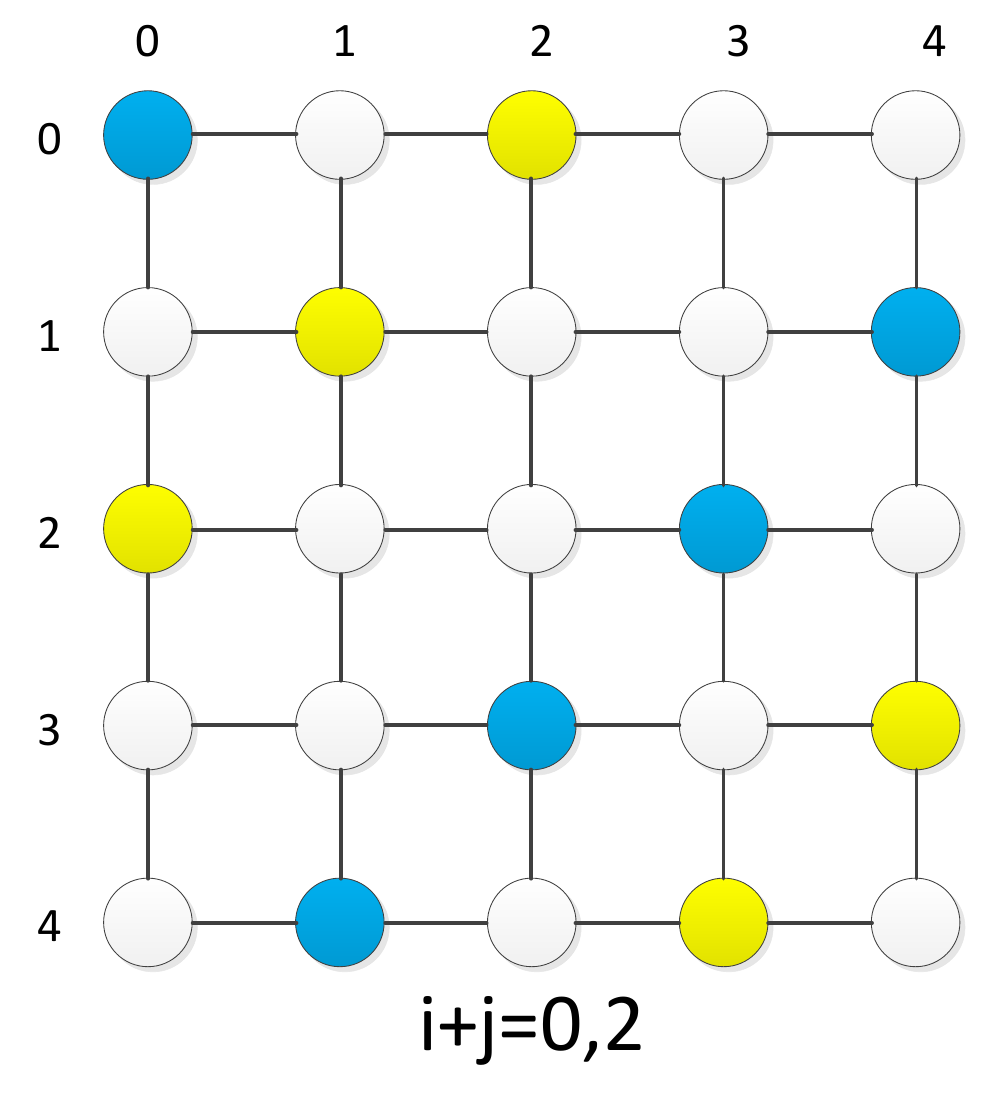}}
 \subfigure[]{\label{}
 \includegraphics[width=3cm]{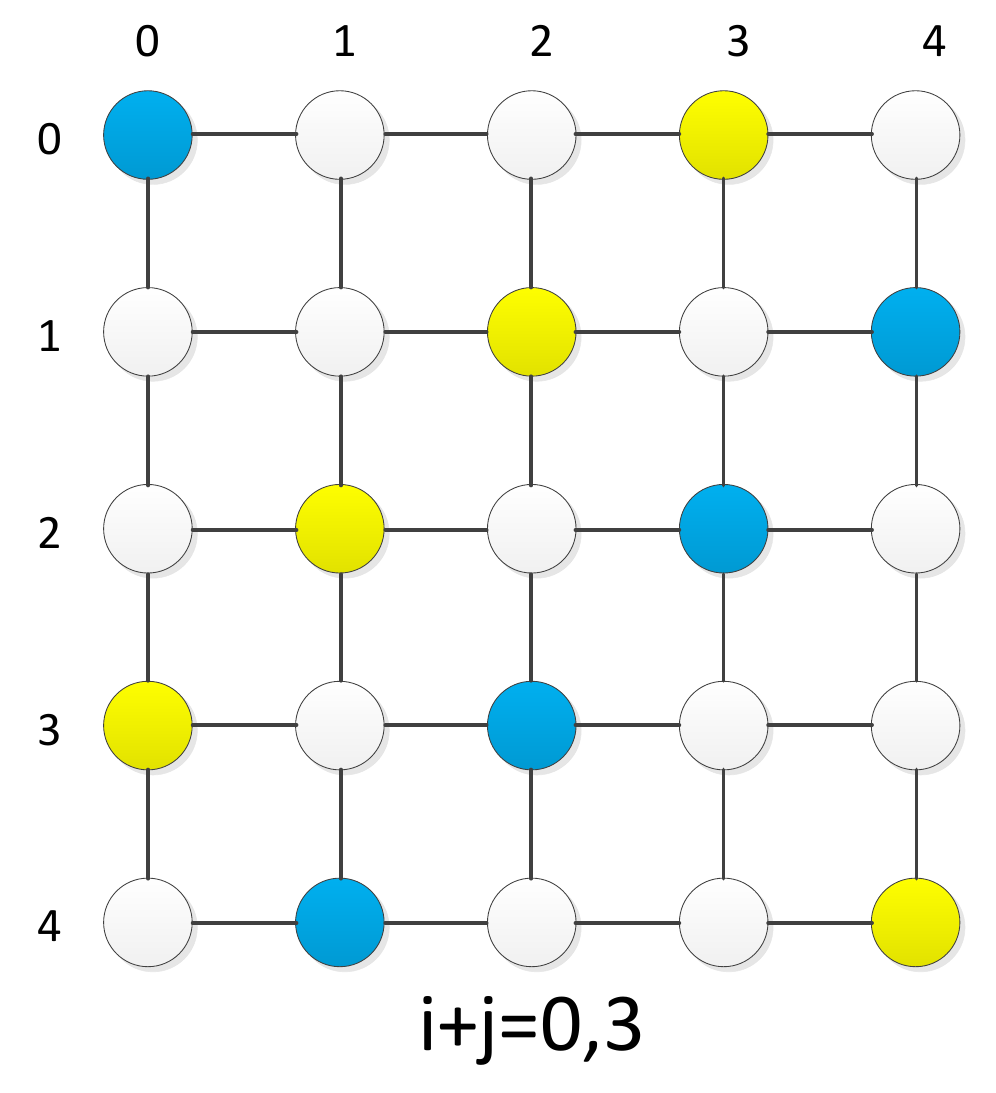}}
 \subfigure[]{\label{}
 \includegraphics[width=3cm]{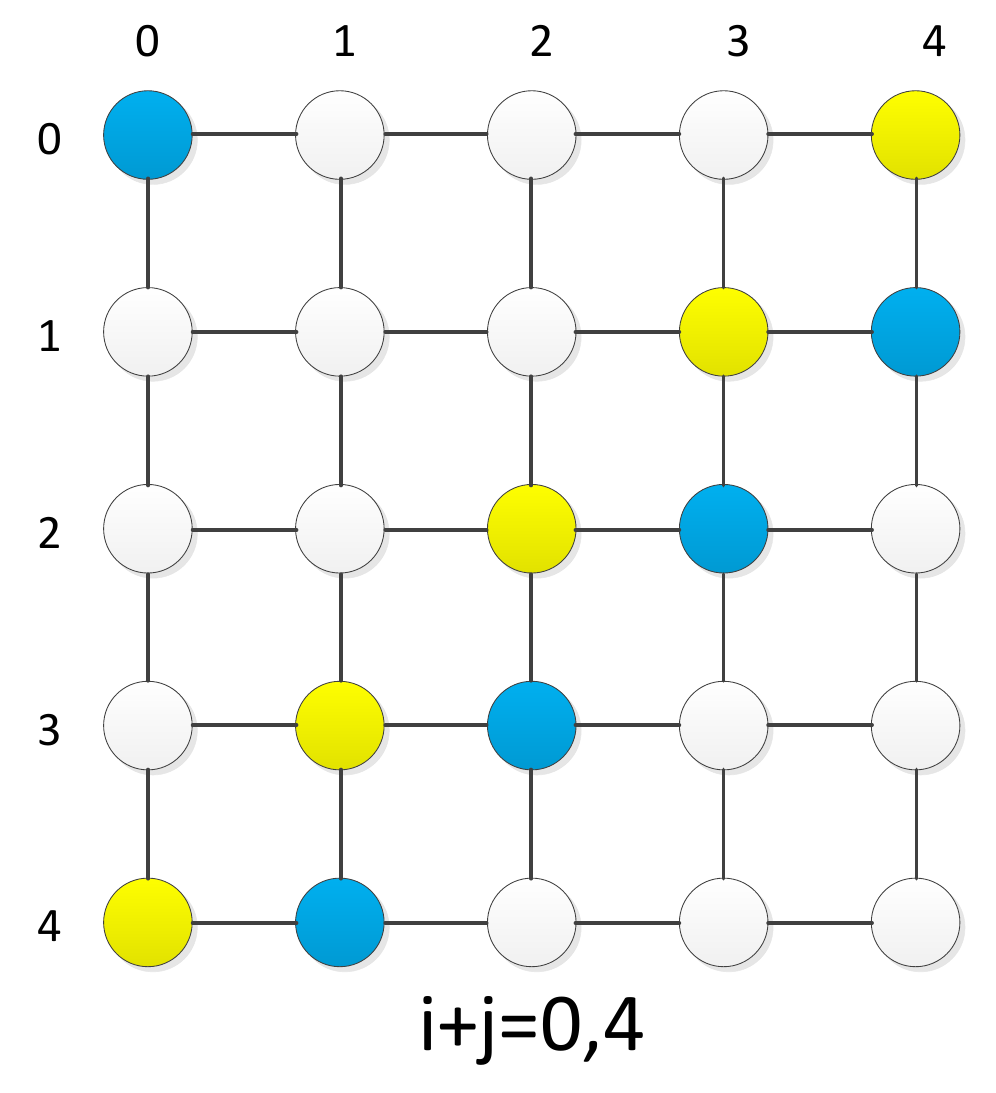}}
 \subfigure[]{\label{}
 \includegraphics[width=3cm]{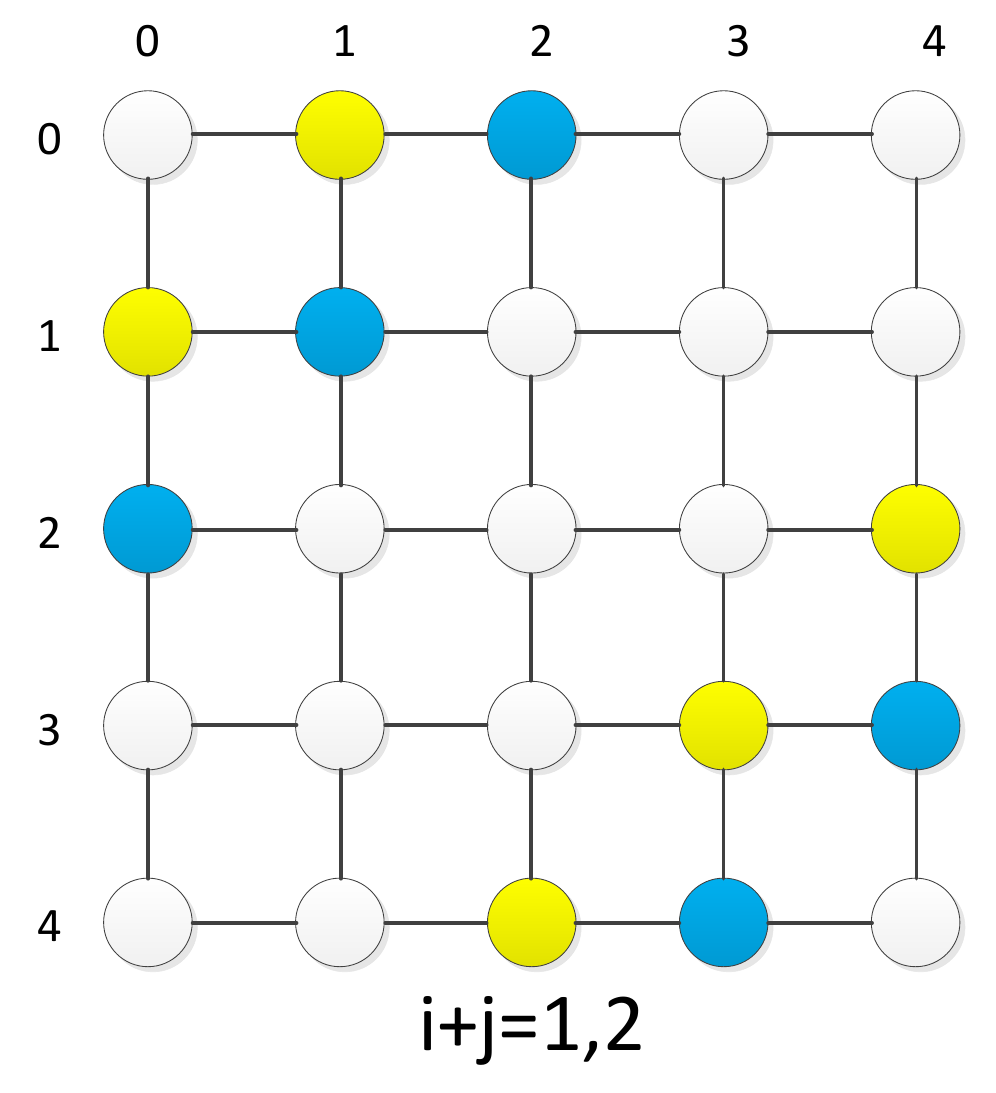}}
 \caption{The exceptional configurations on $5\times 5$ grid. (a)-(e) represent the cases when $j-i=\alpha$. (f)-(j) represent the cases when $j+i=\alpha, \beta$.}
 \label{55grid}
\end{figure}

%\begin{figure}[htb]
%\centering
%  \begin{subfigure}{}
%  \includegraphics[width=3cm]{dui0.pdf}
%  \label{}
%  \end{subfigure}
%  \begin{subfigure}{}
%  \includegraphics[width=3cm]{dui1.pdf}
%  \label{}
%  \end{subfigure}
%  \begin{subfigure}{}
%  \includegraphics[width=3cm]{dui2.pdf}
%  \label{}
%  \end{subfigure}
%  \begin{subfigure}{}
%  \includegraphics[width=3cm]{dui3.pdf}
%  \end{subfigure}
%  \begin{subfigure}{}
%  \includegraphics[width=3cm]{dui4.pdf}
%  \label{}
%  \end{subfigure}
%\caption{The exceptional configurations on $5\times 5$ grid. (a)-(e) represent the cases when $j-i=\alpha$. (f)-(j) represent the cases when $j+i=\alpha, \beta$.}
%\label{55grid}
%\end{figure}

\begin{theorem}
For an arbitrary non-negative integer $t$, if we run the quantum walks on the $N\times N$ grid for $t$ steps, where the marked vertices are $(i,j)$, $j-i=\alpha$, $i, j=0,1,\ldots,N-1$ and $\alpha$ is an arbitrary integer, the probability of finding a marked vertex is $\frac{1}{N}$.
\end{theorem}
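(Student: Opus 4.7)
The plan is to reduce the statement to the known diagonal case $\alpha = 0$ established by Ambainis and Rivosh~\cite{ambainis2008exceptional}, by exploiting a translation symmetry of the AKR walk on the torus. The key point is that the only role of $\alpha$ is to translate the marked set by $\alpha$ in the $j$-coordinate, while every other ingredient of the walk (the shift operator, the Grover coin, the uniform initial state) is invariant under position translations.

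First I would introduce the unitary translation $T_\alpha : |d,i,j\rangle \mapsto |d,i,j+\alpha\rangle$ (with addition modulo $N$) on $\mathcal{H}^{c}\otimes\mathcal{H}^{p}$. Because the shift $S$ acts only by $\pm 1$ on the position coordinates in a translation-covariant way, and the Grover coin $G$ acts only on the coin register, both operators commute with $T_\alpha$. Writing $C^{(0)}$ and $C^{(\alpha)}$ for the coin operators associated with the marked sets $M_0 = \{(i,i) : i = 0,\ldots,N-1\}$ and $M_\alpha = \{(i,j) : j - i \equiv \alpha \pmod N\}$ respectively, and observing that $T_\alpha$ carries $M_0$ bijectively onto $M_\alpha$, one obtains $C^{(\alpha)} = T_\alpha\, C^{(0)}\, T_\alpha^{-1}$ and therefore $U^{(\alpha)} = T_\alpha\, U^{(0)}\, T_\alpha^{-1}$.

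Second, the initial equal-superposition state $|\psi(0)\rangle$ is manifestly $T_\alpha$-invariant, so
\[
|\psi^{(\alpha)}(t)\rangle \;=\; (U^{(\alpha)})^{t} |\psi(0)\rangle \;=\; T_\alpha\, (U^{(0)})^{t}\, T_\alpha^{-1} |\psi(0)\rangle \;=\; T_\alpha\, |\psi^{(0)}(t)\rangle .
\]
Since $T_\alpha$ is a basis permutation that sends marked vertices of $M_0$ onto marked vertices of $M_\alpha$, the total probability of observing a marked vertex after $t$ steps is identical for the two walks. Invoking the Ambainis--Rivosh diagonal result (which shows that every amplitude keeps modulus $\tfrac{1}{2N}$, so that the success probability for the $N$ diagonal marks stays at $\tfrac{1}{N}$ for all $t$), the same conclusion holds for the shifted configuration and arbitrary $\alpha$.

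The only step that requires genuine care is verifying that $T_\alpha$ commutes with $S$ under the periodic boundary conditions; this is essentially immediate from the four shift rules together with periodicity in $j$, but it is the natural place for a bookkeeping error. If one preferred a self-contained argument avoiding any appeal to \cite{ambainis2008exceptional}, an alternative route is to prove by induction on $t$ that $|a_{d,i,j}(t)| = \tfrac{1}{2N}$ for every basis state, by analysing how $G$ and $-I$ act on amplitude columns exhibiting the symmetry $(i,j)\mapsto(i+k,j+k)$ of the marked set. This second approach, however, amounts to reproving the diagonal theorem by hand and obscures the essential mechanism, so I would write up the symmetry reduction.
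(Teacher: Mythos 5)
Your proof is correct, and it takes a genuinely different route from the paper's. The paper works directly with the shifted configuration: it introduces the diagonal superposition states $|\psi_{\leftarrow,j}\rangle$ and $|\psi_{\rightarrow,j}\rangle$ supported on the lines $j-i=\mathrm{const}$, rewrites the initial state in that basis, and proves by induction an explicit closed form for $|\psi(t)\rangle$ as a signed sum of these states, from which the sign-flip evolution and the constant success probability follow. You instead observe that the configuration $j-i=\alpha$ is the image of the diagonal $j-i=0$ under the position translation $T_\alpha$, that $T_\alpha$ commutes with the shift and with the Grover coin and fixes the uniform initial state, hence $U^{(\alpha)}=T_\alpha U^{(0)}T_\alpha^{-1}$ and $|\psi^{(\alpha)}(t)\rangle=T_\alpha|\psi^{(0)}(t)\rangle$, reducing everything to the known Ambainis--Rivosh diagonal case. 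Your verification of the conjugation identities is sound (the commutation of $T_\alpha$ with $S$ under periodic boundary conditions is exactly the point to check, and it holds), and since $T_\alpha$ is a basis permutation carrying $M_0$ onto $M_\alpha$, the success probabilities coincide; the sign-flip property also transfers, so you even recover the stronger ``exceptional configuration'' claim, not just the probability statement. The trade-off: your symmetry reduction is shorter, less error-prone, and immediately yields the paper's subsequent Remark (the case $i+j=\alpha$, via a reflection composed with a translation) as well as translates of any other known exceptional configuration; the paper's explicit induction, on the other hand, produces the concrete form of $|\psi(t)\rangle$, which it reuses in Theorem~2 (the two-diagonal combination, where no single translation reduces to a previously known case) and in the coherence analysis of Section~5. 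For the theorem as stated, your argument is complete.
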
\label{extensiontheorem}

\begin{proof}
To begin with, let
\begin{equation}
|\psi_{\leftarrow,j}\rangle= \frac{1}{\sqrt{2N}}\Sigma_{i}(|\leftarrow,i,i+j  \rangle +|\downarrow,i,i+j \rangle),
\end{equation}
\begin{equation}
|\psi_{\rightarrow,j}\rangle=\frac{1}{\sqrt{2N}}\Sigma_{i}(|\rightarrow,i,i+j \rangle +|\uparrow,i,i+j \rangle).
\end{equation}
Then, the initial state of the quantum walk can be written as
\begin{equation}
|\psi(0)\rangle=\frac{1}{\sqrt{2N}}\Sigma_{j}(|\psi_{\leftarrow,j}\rangle +|\psi_{\rightarrow,j}\rangle).
\end{equation}
If $\alpha=0$, it is the known diagonal configuration~\cite{ambainis2008exceptional}.
Here, we try to extend $\alpha$ from 0 to any integer.
According to the AKR algorithm ($(G,-I)$-Type coined quantum walk) and mathematical induction, we get
\begin{equation}
|\psi(t)\rangle=\frac{1}{\sqrt{2N}}(\sum_{j=\alpha}^{N-t-1+\alpha}|\psi_{\leftarrow,j} \rangle -\sum_{j=N-t+\alpha}^{N+\alpha-1}|\psi_{\leftarrow,j} \rangle -\sum_{j=\alpha+1}^{t+\alpha}|\psi_{\rightarrow,j} \rangle  +\sum_{j=t+\alpha+1}^{N+\alpha}|\psi_{\rightarrow,j} \rangle),
\end{equation}
where $t<N$ and the operation is module N.
%After $N$ steps of the quantum walks, we get $|\psi(N)\rangle =-|\psi(0)\rangle.$
If we run the walks for $M=xN+y$ steps where $(y<N)$, we have the final state $|\psi(M)\rangle=(-1)^{x}|\psi(y)\rangle.$
In conclusion, the system only evolves by flipping signs and stays in a uniform probability distribution all the time.
Therefore, the probability of finding a marked vertex $(j=\alpha)$ is $\frac{1}{N}.$
\end{proof}

\begin{remark}
For an arbitrary integer $\alpha$, if the $N$ marked vertices $(i,j)$ satisfy the condition $i+j=\alpha$, the success probability is $\frac{1}{N}$ over time.
\end{remark}

\begin{proof}
Let
\begin{equation} |\phi_{\leftarrow,j}\rangle=\frac{1}{\sqrt{2N}}\sum_{i}(|\leftarrow,i,j-i\rangle+|\downarrow,i,j-i\rangle) \end{equation}
\begin{equation}
|\phi_{\rightarrow,j}\rangle=\frac{1}{\sqrt{2N}}\sum_{i}(|\rightarrow,i,j-i\rangle+|\uparrow,i,j-i\rangle)
\end{equation}
and then the initial state of the quantum walk is:
\begin{equation}
|\phi(0)\rangle=\frac{1}{\sqrt{4N^{2}}}\sum_{i,j,d}|d,i,j\rangle=\frac{1}{\sqrt{2N}} \sum_{j} (|\phi_{\leftarrow,j}\rangle+|\phi_{\rightarrow,j}\rangle).
\end{equation}
Using the similar proof above, it is easy to obtain the final conclusion.
\end{proof}

In fact, the above images (b)-(e) can also be seen as the translation and the rotation invariance of image (a) by the AKR algorithm on two-dimensional grid.
In addition, we find a combination of $j-i=\beta$ and $j-i=\alpha$, where $(i,j)$ is the coordinate of the marked vertex, would result in a new exceptional configuration for AKR algorithm.
Of course, the conclusion still holds if $j+i=\beta$ and $j+i=\alpha$.
These cases are shown in Figure \ref{55grid} (f)-(j).
Thus, we get a new class of exceptional configuration, which extends the range of exceptional configuration known before.
We elaborate on this in the following theorem.

\begin{theorem}
Suppose that the coordinate of the $2N$ marked vertices $(i,j)$ satisfies $j-i=\beta$ and $j-i=\alpha$, where $k=\beta-\alpha>0$,
then it is an exceptional configuration when $k=\lfloor\frac{N}{2}\rfloor$, where $\lfloor\cdot\rfloor$ is the function that its value is the largest integer less than or equal to the independent variable.
\end{theorem}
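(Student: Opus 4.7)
The plan is to extend the induction-based wave analysis from the proof of Theorem 1 to the case of two marked diagonals at $j-i=\alpha$ and $j-i=\beta$ with $k=\beta-\alpha=\lfloor N/2\rfloor$. I would reuse the orthonormal basis $\{|\psi_{\leftarrow,j}\rangle,|\psi_{\rightarrow,j}\rangle\}_{j=0}^{N-1}$ already introduced there and expand the initial state as $|\psi(0)\rangle=\frac{1}{\sqrt{2N}}\sum_j(|\psi_{\leftarrow,j}\rangle+|\psi_{\rightarrow,j}\rangle)$. Then I would verify, by a direct computation essentially identical to the one in Theorem 1 but with the coin $-I$ now applied at both $j=\alpha$ and $j=\beta$, that $U=SC$ acts on this basis as a signed permutation: on an unmarked diagonal $j$, $U|\psi_{\leftarrow,j}\rangle=|\psi_{\leftarrow,j-1}\rangle$ and $U|\psi_{\rightarrow,j}\rangle=|\psi_{\rightarrow,j+1}\rangle$; on a marked diagonal, $U|\psi_{\leftarrow,j}\rangle=-|\psi_{\rightarrow,j+1}\rangle$ and $U|\psi_{\rightarrow,j}\rangle=-|\psi_{\leftarrow,j-1}\rangle$.

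Next I would analyze the orbit structure of this signed permutation. Viewing the $N$ diagonal indices as a cycle with two marked points that divide it into arcs of lengths $k$ and $N-k$, each wave shuttles back and forth between the two marked points, so the basis decomposes into two periodic orbits, of lengths $2k$ and $2(N-k)$. The choice $k=\lfloor N/2\rfloor$ makes the two arcs balanced (equal when $N$ is even, differing by one when $N$ is odd), which is the ingredient that makes a closed-form description of $|\psi(t)\rangle$ for all $t$ compact.

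The third step is to establish, by induction on $t$ in the spirit of the inductive step used in Theorem 1, an explicit formula
\begin{equation}
|\psi(t)\rangle=\frac{1}{\sqrt{2N}}\sum_{j}\bigl(\epsilon_{\leftarrow,j}(t)|\psi_{\leftarrow,j}\rangle+\epsilon_{\rightarrow,j}(t)|\psi_{\rightarrow,j}\rangle\bigr),
\end{equation}
with signs $\epsilon_{\cdot,j}(t)\in\{\pm 1\}$ determined by explicit intervals of $j$ built from $\alpha$, $\beta$ and $t$ --- the analogue of the formula in Theorem 1, but with each of the two wavefronts born at $\alpha$ and $\beta$ now contributing sign flips. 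Once this form is in hand, every computational-basis amplitude of $|\psi(t)\rangle$ has magnitude $1/(2N)$, so the induced position distribution is uniformly $1/N^2$ across all $N^2$ vertices, and the probability of finding one of the $2N$ marked vertices is $2/N$ at every step. This matches the initial probability and thereby confirms the exceptional property of Definition 1.

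The main technical obstacle will be the combinatorial bookkeeping of the sign pattern in the induction step, since the wavefronts emanating from $\alpha$ and $\beta$ now superpose and their collisions must not spoil the uniform-magnitude property. The symmetric value $k=\lfloor N/2\rfloor$ is precisely what makes the two orbits interlock cleanly, so that any collision reduces to a global $\pm 1$ rather than producing nontrivial interference; I would expect a short case split on the parity of $N$ inside the inductive step to cope with the floor in the definition of $k$.
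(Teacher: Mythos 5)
Your proof is correct, but it reaches the conclusion by a genuinely different and cleaner route than the paper. The paper never isolates the signed-permutation structure of $U$ on the basis $\{|\psi_{\leftarrow,j}\rangle,|\psi_{\rightarrow,j}\rangle\}$; instead it writes out $|\psi(t)\rangle$ explicitly as signed sums over intervals of $j$ at every step, splits into the cases $N=2k$ and $N=2k+1$, and verifies periodicity ($|\psi(k)\rangle=-|\psi(0)\rangle$ in the even case, $|\psi(3k+1)\rangle=|\psi(0)\rangle$ in the odd case) before concluding. Your first observation already finishes the proof: $U$ maps each of the $2N$ orthonormal, disjointly supported vectors $|\psi_{\leftarrow,j}\rangle,|\psi_{\rightarrow,j}\rangle$ to $\pm$ another such vector, so starting from coefficients all equal to $\frac{1}{\sqrt{2N}}$ every coefficient stays $\pm\frac{1}{\sqrt{2N}}$, every computational amplitude stays $\pm\frac{1}{2N}$, the position distribution stays uniform, and the $2N$ marked vertices carry probability $\frac{2}{N}$ at every step. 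Your steps on the orbit decomposition (cycles of lengths $2k$ and $2(N-k)$) and the explicit interval formula for the signs are thus not needed for the theorem itself, though they do recover the paper's explicit periods. One point to correct: your closing worry about ``collisions'' of the two wavefronts and the claim that $k=\lfloor\frac{N}{2}\rfloor$ is what prevents ``nontrivial interference'' is misplaced. The two orbits are disjoint subsets of an orthonormal basis with pairwise disjoint computational supports, so no interference can occur for any $k$; your own argument therefore proves the statement without the hypothesis $k=\lfloor\frac{N}{2}\rfloor$ (indeed for any set of marked diagonals of the same orientation), a hypothesis the paper imposes only because its explicit sign bookkeeping is tractable in that symmetric case. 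The value of $k$ affects only the period of the evolution, not the uniform-magnitude property.
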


\begin{proof}
Here, we continue to use the notation presented in the Theorem 1. According to the model of quantum walk search on the grid ($(G,-I)$-Type quantum walk), the initial state
$|\psi(0)\rangle=\frac{1}{\sqrt{2N}}\Sigma_{j}(|\psi_{\leftarrow,j}\rangle +|\psi_{\rightarrow,j}\rangle)$ evolves by flipping signs.
Now let us discuss the odd and even case of $N$ in details based on the rule of AKR algorithm ($(G,-I)$-Type coined quantum walk) and mathematical induction.

When $N$ is an even number, i.e. $N=2\lfloor\frac{N}{2}\rfloor=2k$, the evolutionary process can be expressed as:
%\begin{equation}
%|\psi(0)\rangle=\frac{1}{\sqrt{2N}}\Sigma_{j}(|\psi_{\leftarrow,j}\rangle +|\psi_{\rightarrow,j}\rangle),
%\end{equation}
\begin{equation}\label{lsteps}
  \begin{split}
|\psi(l)\rangle=\frac{1}{\sqrt{2N}}&(-\sum_{j=\alpha+1}^{\alpha+l}|\psi_{\rightarrow,j}\rangle
                                     -\sum_{j=\beta+1}^{\beta+l}|\psi_{\rightarrow,j}\rangle
                                     +\sum_{j=\alpha+l+1}^{\beta}|\psi_{\rightarrow,j}\rangle
                                     +\sum_{j=\beta+l+1}^{N+\alpha}|\psi_{\rightarrow,j}\rangle\\
                                   &-\sum_{j=\alpha-l}^{\alpha-1}|\psi_{\leftarrow,j}\rangle
                                    -\sum_{j=\beta-l}^{\beta-1}|\psi_{\leftarrow,j}\rangle
                                    +\sum_{j=\alpha}^{\beta-l-1}|\psi_{\leftarrow,j}\rangle
                                    +\sum_{j=\beta}^{N+\alpha-l-1}|\psi_{\leftarrow,j}\rangle),
  \end{split}
\end{equation}
\begin{equation}
|\psi(k)\rangle=-|\psi(0)\rangle,
\end{equation}
where $0<l<k$.
%{\color{red}{So this whole evolution's period is $2k$.}}
So the period of this whole evolution is $2k$.
And the system only evolves by flipping signs.

When $N$ is an odd number, i.e. $N=2\lfloor\frac{N}{2}\rfloor+1=2k+1$,
the evolution of the first $k-1$ steps is the same as the previous case shown in Eq.~(\ref{lsteps}).
And then the evolutionary process can be expressed as:
\begin{equation}
|\psi(k)\rangle=\frac{1}{\sqrt{2N}}(|\psi_{\rightarrow,\alpha}\rangle -\sum_{j\neq\alpha}|\psi_{\rightarrow,j}\rangle+|\psi_{\leftarrow,\beta}\rangle-\sum_{j\neq\beta}|\psi_{\leftarrow,j}\rangle),
\end{equation}
\begin{equation}
|\psi(k+l')\rangle=\frac{1}{\sqrt{2N}}(|\psi_{\rightarrow,\alpha+l'}\rangle -\sum_{j\neq\alpha+l'}|\psi_{\rightarrow,j}\rangle+|\psi_{\leftarrow,\beta-l'}\rangle-\sum_{j\neq\beta-l'}|\psi_{\leftarrow,j}\rangle),
\end{equation}
\begin{equation}
|\psi(2k)\rangle=\frac{1}{\sqrt{2N}}(|\psi_{\rightarrow,\beta}\rangle -\sum_{j\neq\beta}|\psi_{\rightarrow,j}\rangle+|\psi_{\leftarrow,\alpha}\rangle-\sum_{j\neq\alpha}|\psi_{\leftarrow,j}\rangle),
\end{equation}
\begin{equation}
  \begin{split}
|\psi(2k+l'')\rangle=\frac{1}{\sqrt{2N}}&(\sum_{j=\beta+1}^{\beta+l''}|\psi_{\rightarrow,j}\rangle
                                     +\sum_{j=\alpha+1}^{\alpha+l''-1}|\psi_{\rightarrow,j}\rangle
                                     -\sum_{j=\beta+l''+1}^{N+\alpha}|\psi_{\rightarrow,j}\rangle
                                     -\sum_{j=N+\alpha+l''}^{N+\beta}|\psi_{\rightarrow,j}\rangle\\
                                   &+\sum_{j=\alpha-l''}^{\alpha-1}|\psi_{\leftarrow,j}\rangle
                                    +\sum_{j=\beta-l''+1}^{\beta-1}|\psi_{\leftarrow,j}\rangle)
                                    -\sum_{j=\alpha}^{\beta-l''}|\psi_{\leftarrow,j}\rangle
                                    -\sum_{j=\beta}^{N+\alpha-l''-1}|\psi_{\leftarrow,j}\rangle,
  \end{split}
\end{equation}
\begin{equation}
|\psi(3k+1)\rangle=|\psi(0)\rangle,
\end{equation}
where $0<l'<k$ and $0<l''\leq k$.
%{\color{red}{Thus this whole evolution's period is $3k+1$.}}
Thus the period of this whole evolution is $3k+1$.
It is easy to see that the quantum state at any given time is just a sign change compared with the initial state.

To sum up, the whole quantum system only evolves by flipping signs.
Also the success probability is always $\frac{2}{N}$, which is the same as their probability in the initial state over time.
Therefore, the arrangement of marked vertices satisfying $j-i=\beta$ and $j-i=\alpha$ and $\beta-\alpha=\lfloor\frac{N}{2}\rfloor$, is an exceptional configuration.

\end{proof}

\subsection{Two modified quantum walks}
As stated above, there are many exceptional configurations on the $N\times N$ grid under $(G,-I)$-Type coined quantum walk. In such situations, the success probability of the AKR algorithm will not grow over time.

In this section, we develop two promising approaches, i.e. two kinds of modified quantum walks that can find the marked vertices in the above configurations with a higher probability in terms of numerical simulation.
Through the numerical simulation results, we can find the approximate running time (steps) to obtain a higher success probability than the exceptional case.

For convenience, we refer to the AKR algorithm~\cite{ambainis2005coins} as algorithm $\mathcal{A}$. In the following subsections, we will propose two algorithms based on quantum walks which are called algorithm $\mathcal{B}$ and algorithm $\mathcal{C}$ respectively. The coin flip transformations of the above three algorithms are listed in Table~\ref{Three algorithms}. Here we mainly consider the exceptional configuration, where the coordinate of the marked vertex $(i,j)$ satisfies $j-i=\alpha$, and $\alpha$ is an arbitrary integer. If $i+j=\alpha$, it can improve in the similar way. And thus we will not repeat.

\begin{table}[!t]
\footnotesize
\caption{The coin flip transformations of three algorithms.}
\label{Three algorithms}
\tabcolsep 26pt %space between two columns. 用于调整列间距
\begin{tabular*}{\textwidth}{ccccc}
\toprule
  $Vertex \backslash Algorithm$ & $\mathcal{A}$ & $\mathcal{B}$ & $\mathcal{C}$ (odd step)  & $\mathcal{C}$ (even step) \\\hline
  unmarked vertices              & G  & G  & G            & G     \\
  marked vertices                & -I & F  & -I           & F      \\
\bottomrule
\end{tabular*}
\end{table}

\subsubsection{Quantum walk with one coin}
In this subsection, we develop a new quantum walk mechanism. Compared with algorithm $\mathcal{A}$, we use $F$ as the coin flip transformation instead of $-I$, where $F$ denotes the Fourier transformation,
\begin{equation}
F=\frac{1}{2}
\left(
\begin{array}{cccc}
 1  &  1   &  1   &  1 \\
 1  &  i   &  -1  &  -i \\
 1  &  -1  &  1   &  -1 \\
 1  &  -i  &  -1  &  i
\end{array}
\right).
\end{equation}
Thus, algorithm $\mathcal{B}$ based on quantum walk with one coin is implemented as follows,
\begin{itemize}
\itemindent 2.8em
\item [(1)] Initialize the walker to the equal superposition state given by Eq.~(\ref{gridinitial}).
\item [(2)] For the coin oracle $C'$, the coin flip operators $C'_{0}=G$ and $C'_{1}=F$ are applied to the unmarked and marked vertices respectively. The unitary operator $U'=S\cdot C'$ is performed for $t$ times.
\item [(3)] Measure the state of walker in the $|d,i,j\rangle$ basis, and obtain the success probability by calculating inner product between the final state and the marked state.
\end{itemize}

Based on the above settings, the unitary operator in algorithm $\mathcal{B}$ can be described as:
\begin{equation}
\begin{split}
U'&=S\cdot C'\\
  &=\sum_{i,j}(|\downarrow,i,j-1\rangle\langle\uparrow,i,j|+|\uparrow,i,j+1\rangle\langle\downarrow,i,j|
   +|\rightarrow,i-1,j\rangle\langle\leftarrow,i,j|+|\leftarrow,i+1,j\rangle\langle\rightarrow,i,j|)\\
  &\ \ \ \ \cdot[G\otimes(I-\sum_{v}|v\rangle\langle v|)+F\otimes\sum_{v}|v\rangle\langle v|]\\
  &=\frac{1}{2}
    \left(
    \begin{array}{cccc}
     B_{3}+B_{4}   &  -B_{3}+iB_{4}  &  B_{3}-B_{4}   &  B_{3}-iB_{4}  \\
     -B_{1}+B_{2}  &  B_{1}+B_{2}    &  B_{1}+B_{2}   &  B_{1}+B_{2}   \\
     B_{7}+B_{8}   &  B_{7}-iB_{8}   &  B_{7}-B_{8}   &  -B_{7}+iB_{8} \\
     B_{5}+B_{6}   &  B_{5}-B_{6}    &  -B_{5}+B_{6}  &  B_{5}-B_{6}
   \end{array}\right)
\end{split}
\end{equation}
where $v\in M$, $M$ is the marked vertex set and
$$B_{1}=\sum_{(i,j)\not\in M}|i,j-1\rangle\langle i,j|, \ \ \ \ \ \ B_{2}=\sum_{(i,j)\in M}|i,j-1\rangle\langle i,j|,$$
$$B_{3}=\sum_{(i,j)\not\in M}|i,j+1\rangle\langle i,j|, \ \ \ \ \ \ B_{4}=\sum_{(i,j)\in M}|i,j+1\rangle\langle i,j|,$$
$$B_{5}=\sum_{(i,j)\not\in M}|i-1,j\rangle\langle i,j|, \ \ \ \ \ \ B_{6}=\sum_{(i,j)\in M}|i-1,j\rangle\langle i,j|,$$
$$B_{7}=\sum_{(i,j)\not\in M}|i+1,j\rangle\langle i,j|, \ \ \ \ \ \ B_{8}=\sum_{(i,j)\in M}|i+1,j\rangle\langle i,j|.$$

%Here we take the $3\times 3$ grid with diagonal configuration as an example and thus the dimension of the whole system is 36. With the help of \textit{maple}, we calculate the eigenvalue of the unitary operator $U'$ which can be conclude as follows:
%$\lambda_{1}, \cdots, \lambda_{10}$ are the roots of Eq.~(\ref{poly1}) and their algebraic multiplicity is 1; $\lambda_{11},\cdots,\lambda_{20}$ are the roots of Eq.~(\ref{poly2}) and their algebraic multiplicity is 2;
%\begin{equation} \label{poly1} 4i+4z^{10}+(2+2i)z^{8}+(-2+2i)z^{7}+(-1+i)z^{6}+(1-i)z^{4}
%+(2-2i)z^{3}+(2+2i)z^{2}=0
%\end{equation}
%\begin{equation} \label{poly2} 4i+4z^{10}+(5+2i)z^{8}+(1-i)z^{7}+(5+4i)z^{6}+(4+5i)z^{4}
%+(-1+i)z^{3}+(2+5i)z^{2}=0
%\end{equation}
%1 and -1 are both eigenvalues that have algebraic multiplicity 3.
Since the matrix is too complex, it is hard to provide the analytical results.
However, numerical simulation can provide us with direct conclusions and enlightenment.
In order to obtain a more intuitive comparison and a relatively clear proof, we choose to use the numerical simulation method to figure it out.

Next, we consider $N$ marked vertices placed as an exceptional configuration, in which the coordinate ($i,j$) of marked locations on the $N\times N$ grid satisfies $j-i=0$ and $j-i=0, 1$, and $N=3, 16, 100, 300$. We compare the success probability of algorithm $\mathcal{A}$ and algorithm $\mathcal{B}$.

\begin{figure}[htb]
 \centering
 \subfigure[]{\label{sub1}
 \includegraphics[width=4.1cm]{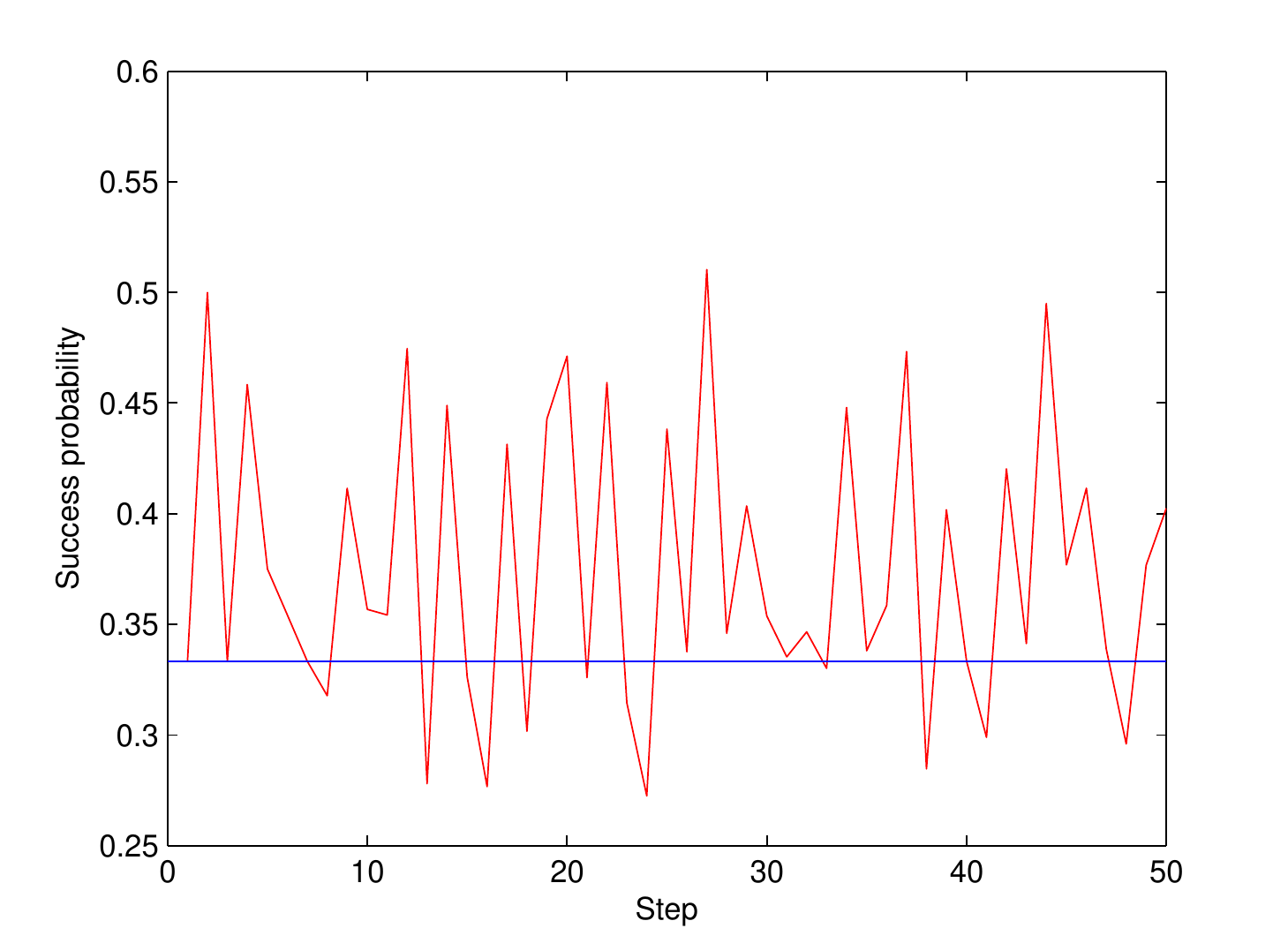}}
 \subfigure[]{\label{sub2}
 \includegraphics[width=4.1cm]{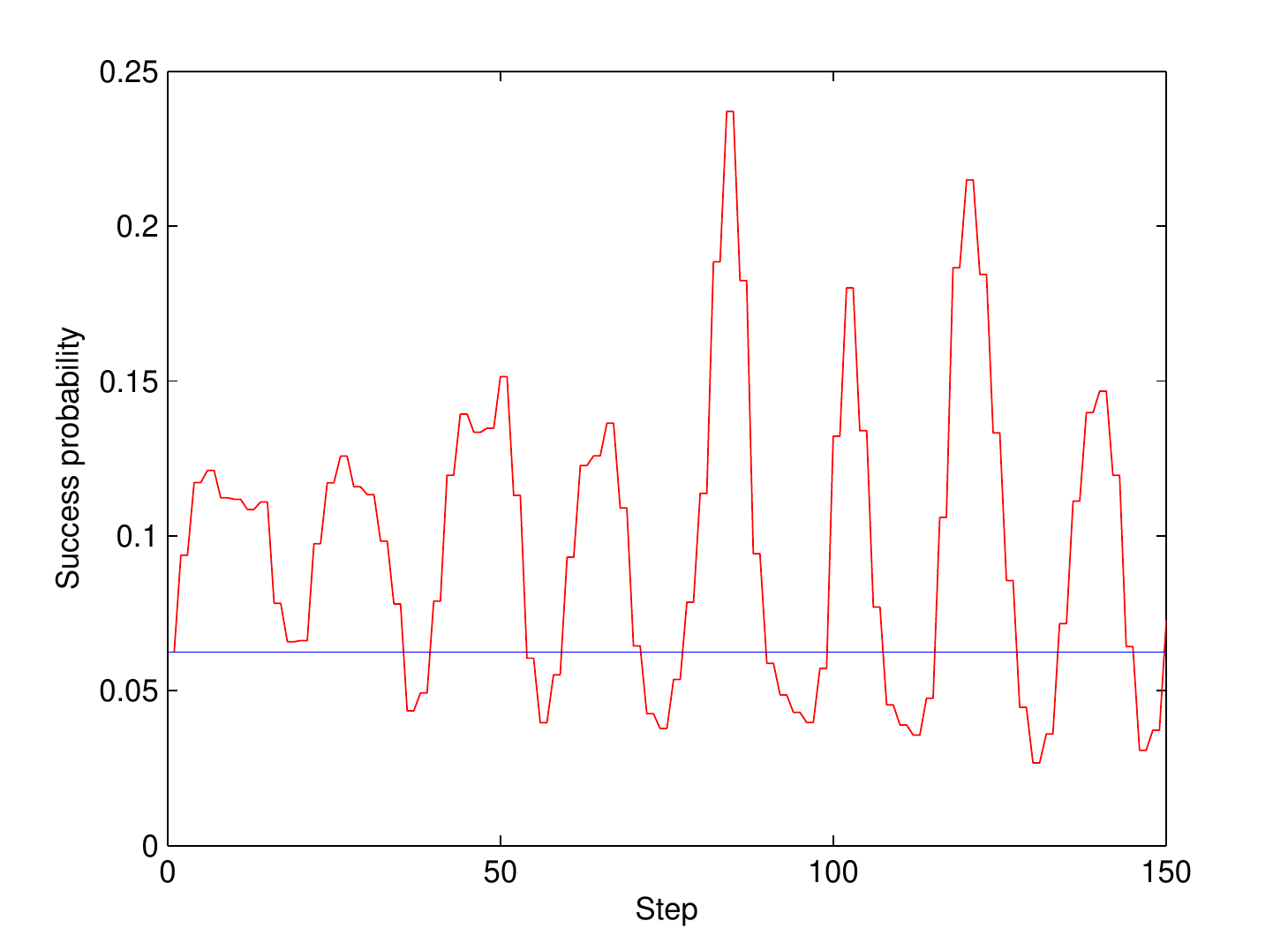}}
 \subfigure[]{\label{sub3}
 \includegraphics[width=4.1cm]{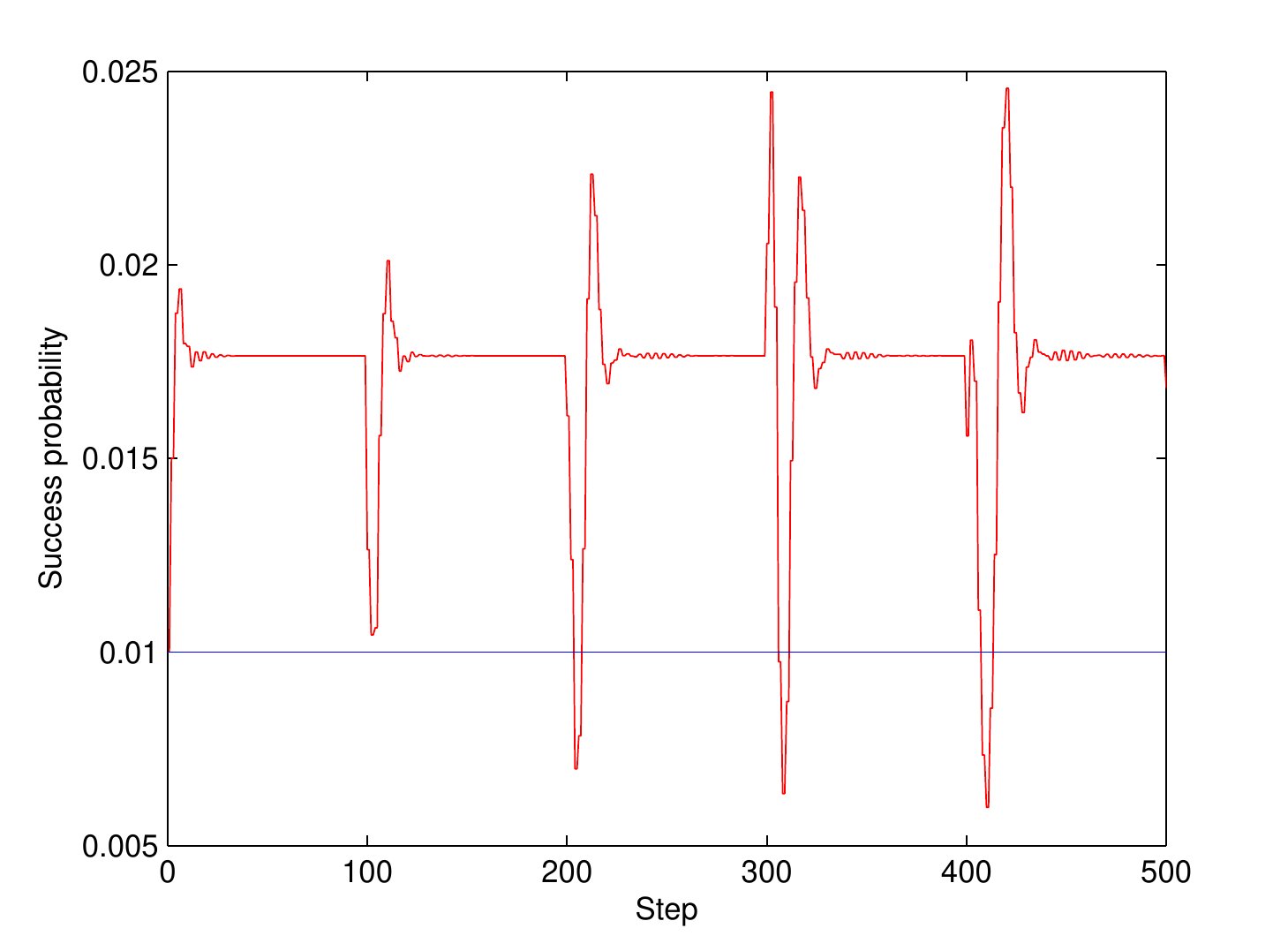}}
 \subfigure[]{\label{sub4}
 \includegraphics[width=4.1cm]{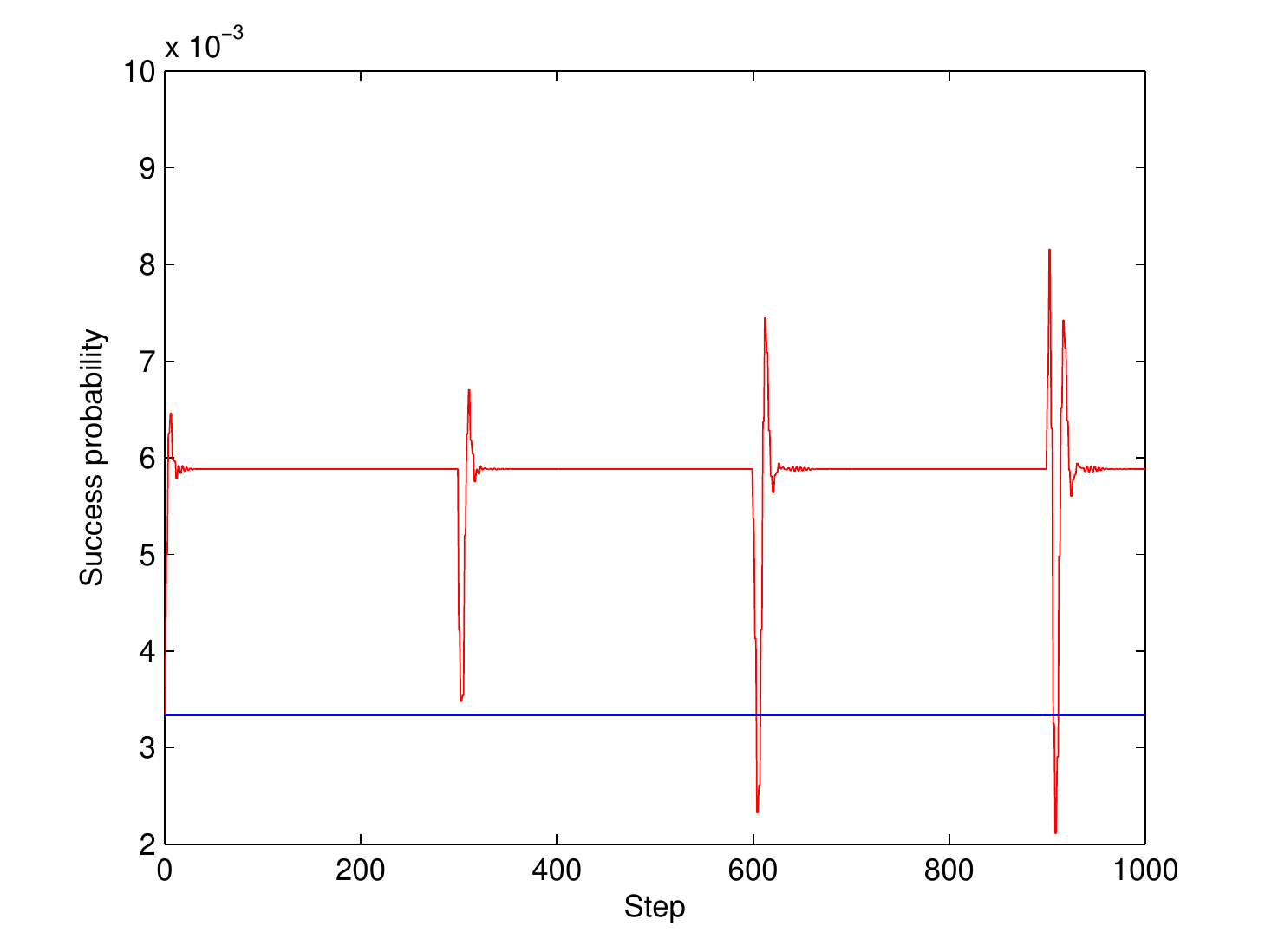}}
 \subfigure[]{\label{sub5}
 \includegraphics[width=4cm]{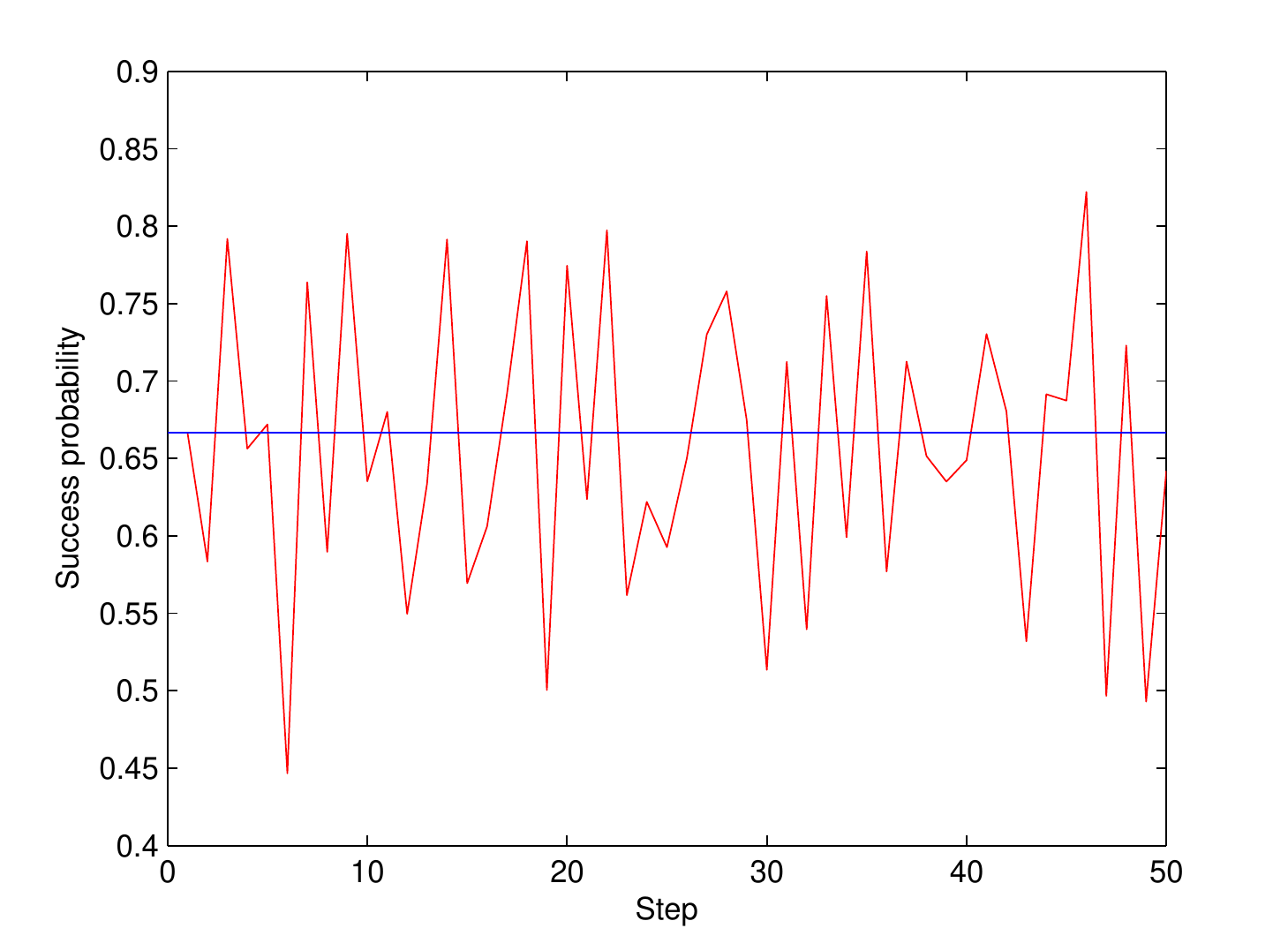}}
 \subfigure[]{\label{sub6}
 \includegraphics[width=4.1cm]{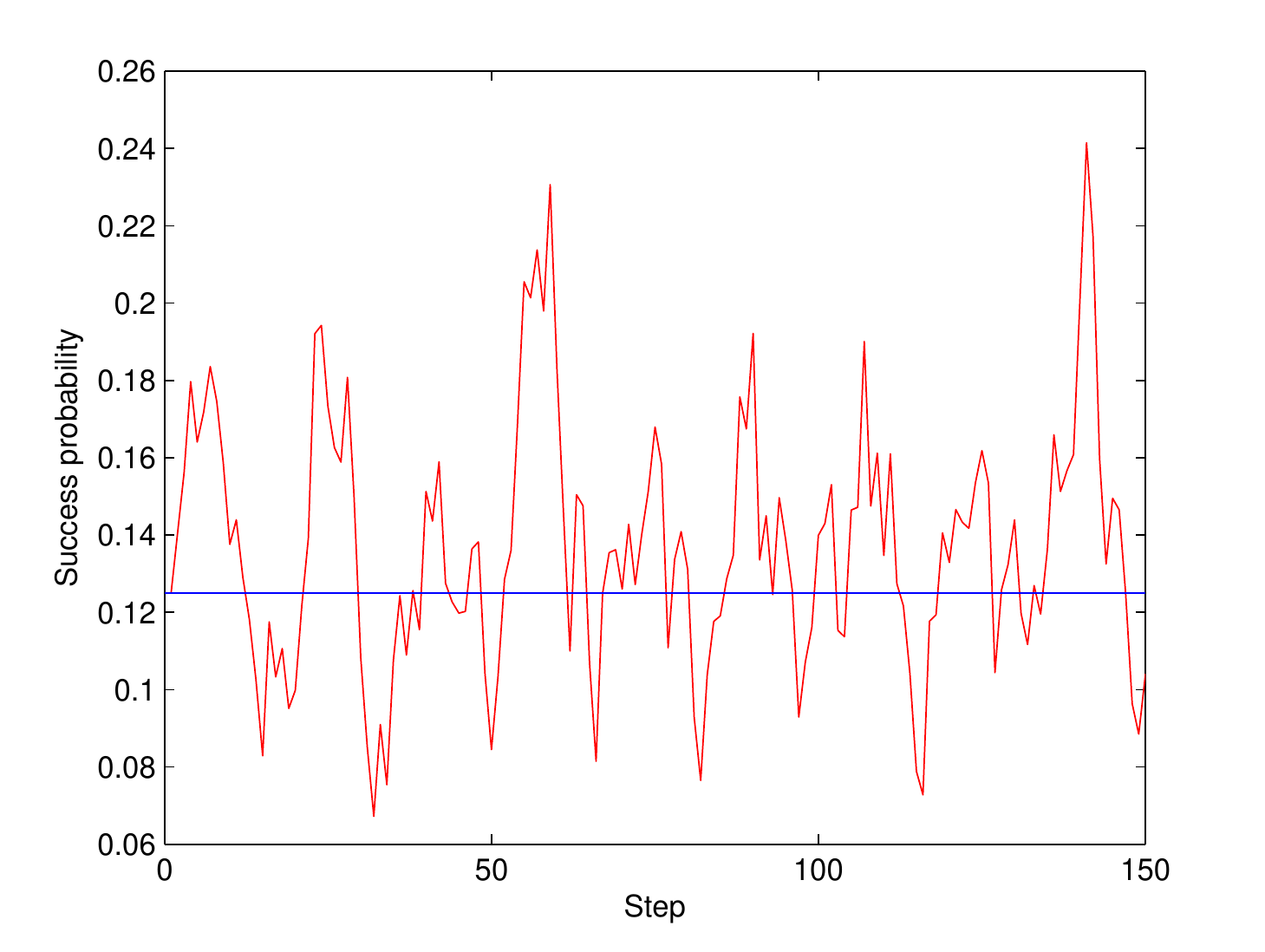}}
 \subfigure[]{\label{sub7}
 \includegraphics[width=4.1cm]{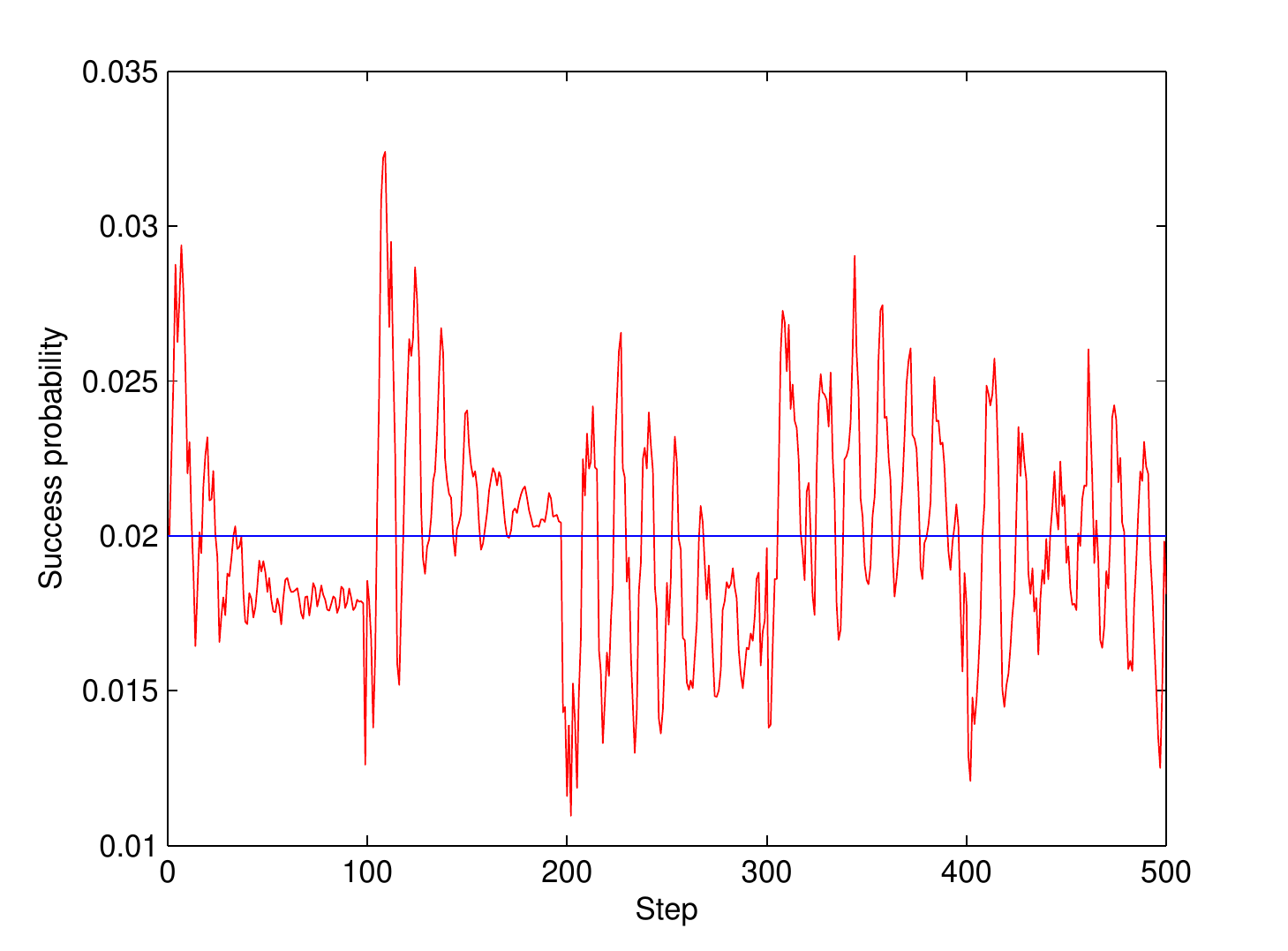}}
 \subfigure[]{\label{sub8}
 \includegraphics[width=4.1cm]{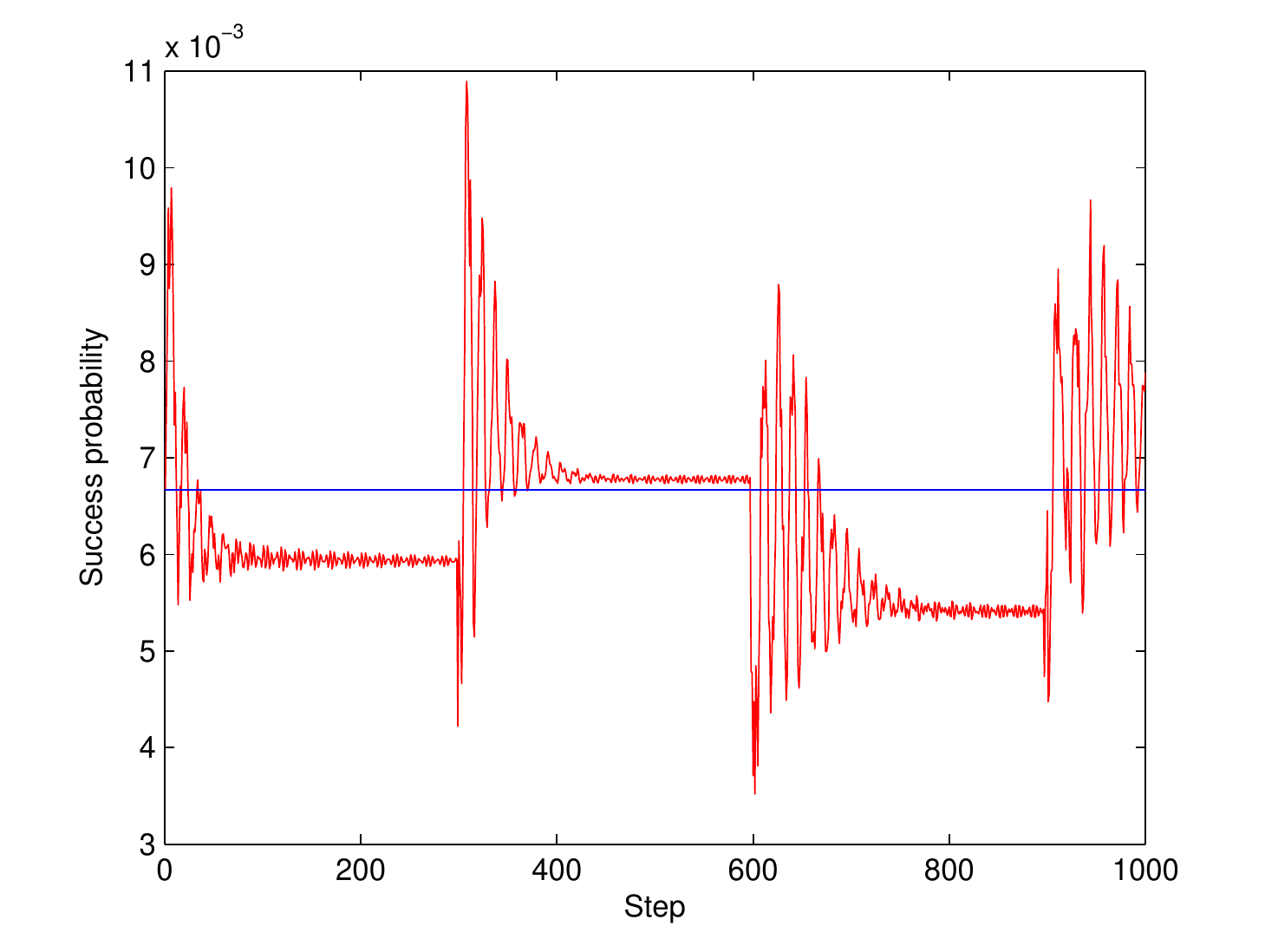}}
 \renewcommand{\figurename}{Figure}
 \caption{The success probability of finding the marked vertices on the two-dimensional grid. (a)(e): $3\times 3$ grid; (b)(f): $16\times 16$ grid; (c)(g): $100\times 100$ grid; (d)(h):$300\times 300$ grid. (a)-(d): $j-i=0$; (e)-(h): $j-i=0, 1$. The blue line corresponds to algorithm $\mathcal{A}$ and the red curve corresponds to algorithm $\mathcal{B}$.}
 \label{gfgrid}
\end{figure}

Figure~\ref{sub1} shows the results of numerical simulations for $3\times 3$ grid. We run the quantum walk 50 steps. From the figure, we know that the success probability of algorithm $\mathcal{A}$ is always 0.33, regardless of the number of the steps, which is consistent with the above discussion. However, the success probability of algorithm $\mathcal{B}$ is shocked, and it exceeds 0.5 after some steps. For example, the probability could reach 0.5103 after 27 steps.
Thus, as for the exceptional configuration, we could find the marked vertices with higher probability by this modified quantum walk.

In Figure~\ref{sub2}-\ref{sub4}, the results of numerical simulations for $16\times 16$, $100\times 100$ and $300\times 300$ grid are presented. We run the algorithms for 150 times, 500 times and 1000 times respectively. The blue line corresponds to algorithm $\mathcal{A}$, which shows the success probability does not change.
The success probability of algorithm $\mathcal{B}$ also reaches peak value which is higher than classically guessing at some special steps. Therefore, for this kind of the arrangement of marked vertices, algorithm $\mathcal{B}$ is better than algorithm $\mathcal{A}$.

For the case when the exceptional configuration is a combination of $j-i=0$ and $j-i=1$, we also discuss their performances under algorithm $\mathcal{B}$.
In particular, the numerical results on $3\times 3$ grid, $16\times 16$ grid, $100\times 100$ grid and $300\times 300$ grid for $50$, $150$, $500$, $1000$ times are shown in Figure \ref{sub5}-\ref{sub8}.
The success probability of algorithm $\mathcal{B}$ varies as the number of steps increases. Moreover, the success probability of most time moments is higher than algorithm $\mathcal{A}$, so we can choose appropriate time moment to take measurement.

\subsubsection{Quantum walk with two coins}
In this subsection, we present another model of quantum walk. For a quantum walk on the $N\times N$ grid driven by two coins, the movement of the walker is determined by the first coin $C^{*}$ and the second coin $C^{**}$ alternately.
%the particle moves up, down, left or right dictated by the coin alternately. That is to say, at the odd step, it is determined by the first coin $C^{*}$, and at the even step, it is determined by the second coin $C^{**}$.
The state of the quantum walk with two coins on the grid is given by
\begin{equation}
|\psi(t)\rangle=\Sigma_{i,j,d_{1},d_{2}}|d_{1},d_{2},i,j\rangle.
\end{equation}
The unitary transformations at odd step (flip the $1st$ coin) and even step (flip the $2nd$ coin) can be written as
\begin{equation}
  \begin{split}
  U_{1}&=S_{1}\cdot C^{*} =S_{1}\cdot (C^{*}_{0} \otimes I_{4}\otimes(I_{N^{2}}-\Sigma_{v} |v\rangle \langle v|) + C^{*}_{1} \otimes I_{4}\otimes\Sigma_{v} |v\rangle \langle v|), \\
  U_{2}&=S_{2}\cdot C^{**}=S_{2}\cdot (I_{4}\otimes C^{**}_{0}\otimes(I_{N^{2}}-\Sigma_{v} |v\rangle \langle v|)+ I_{4} \otimes C^{**}_{1}\otimes\Sigma_{v} |v\rangle \langle v|),
  \end{split}
\end{equation}
where $|v\rangle$ is the marked vertex, $C^{*}_{0}=G$, $C^{*}_{1}=-I$, $C^{**}_{0}=G$, $C^{**}_{1}=F$ in algorithm $\mathcal{C}$.
The shift transformations $S_{1}$ and $S_{2}$ are
\begin{equation}
\begin{array}{ccccccc}
 |\uparrow,d_{2},i,j \rangle     &  \xrightarrow{S_{1}}  &  |\downarrow,d_{2},i,j-1 \rangle  &\ \ \ & |d_{1},\uparrow,i,j \rangle     &  \xrightarrow{S_{2}}  &  |d_{1},\downarrow,i,j-1 \rangle   \\
 |\downarrow,d_{2},i,j \rangle   &  \xrightarrow{S_{1}}  &  |\uparrow,d_{2},i,j+1 \rangle  &\ \ \ &
 |d_{1},\downarrow,i,j \rangle   &  \xrightarrow{S_{2}}  &  |d_{1},\uparrow,i,j+1 \rangle  \\
 |\leftarrow,d_{2},i,j \rangle   &  \xrightarrow{S_{1}}  &  |\rightarrow,d_{2},i-1,j \rangle &\ \ \ &
  |d_{1},\leftarrow,i,j \rangle   &  \xrightarrow{S_{2}}  &  |d_{1},\rightarrow,i-1,j \rangle  \\
 |\rightarrow,d_{2},i,j \rangle  &  \xrightarrow{S_{1}}  &  |\leftarrow,d_{2},i+1,j \rangle  &\ \ \ &
  |d_{1},\rightarrow,i,j \rangle  &  \xrightarrow{S_{2}}  &  |d_{1},\leftarrow,i+1,j \rangle
\end{array}
\end{equation}
where $d_{1}, d_{2}\in \{ \uparrow, \downarrow, \leftarrow, \rightarrow \}$.
If we do a total $t$ flips, the corresponding operator is ($k=1,2,3,\ldots$)
\begin{equation}
  U=\left\{
  \begin{array}{ll}
     (U_{2}U_{1})^{k},          &\mbox{$t=2k$ ;}\\
      U_{1}(U_{2}U_{1})^{k-1},  &\mbox{$t=2k-1$.}
  \end{array}
  \right.
\end{equation}

Algorithm $\mathcal{C}$ based on quantum walk with two coins is implemented as follows,
\begin{itemize}
\item 1. Initialize the walker to the equal superposition over all states, given by
         \begin{equation}
         |\psi(0)\rangle=\frac{1}{\sqrt{16N^2}}\Sigma_{i,j,d_{1},d_{2}}|d_{1},d_{2},i,j\rangle.
         \end{equation}
\item 2. Apply the unitary operator $U_{1}=S_{1}\cdot C^{*}$ and $U_{2}=S_{2}\cdot C^{**}$ alternatively.
\item 3. Measure the state of the walker in the $|d_{1},d_{2},i,j\rangle$ basis, and obtain the probability of finding the marked vertices by calculating the inner product between the final state and the marked state.
\end{itemize}

Similar to algorithm $\mathcal{B}$, the specific evolution of algorithm $\mathcal{C}$ can also be written down in detail. Since it involves two coins, the unitary operator is a piecewise function and thus it is more complicated. I will not repeat here.
 And thus the numerical simulation may be required.
Next, we consider $N$ marked vertices placed as an exceptional configuration that the coordinate ($i,j$) of marked vertices on $N\times N$ grid satisfies $j=i$ and $j=i+1$, and $N=3, 16$. We compare the success probability of algorithms $\mathcal{A}$, $\mathcal{B}$ and $\mathcal{C}$.

\begin{figure}[htb]
 \centering
 \subfigure[]{\label{sub1c}
 \includegraphics[width=4.2cm]{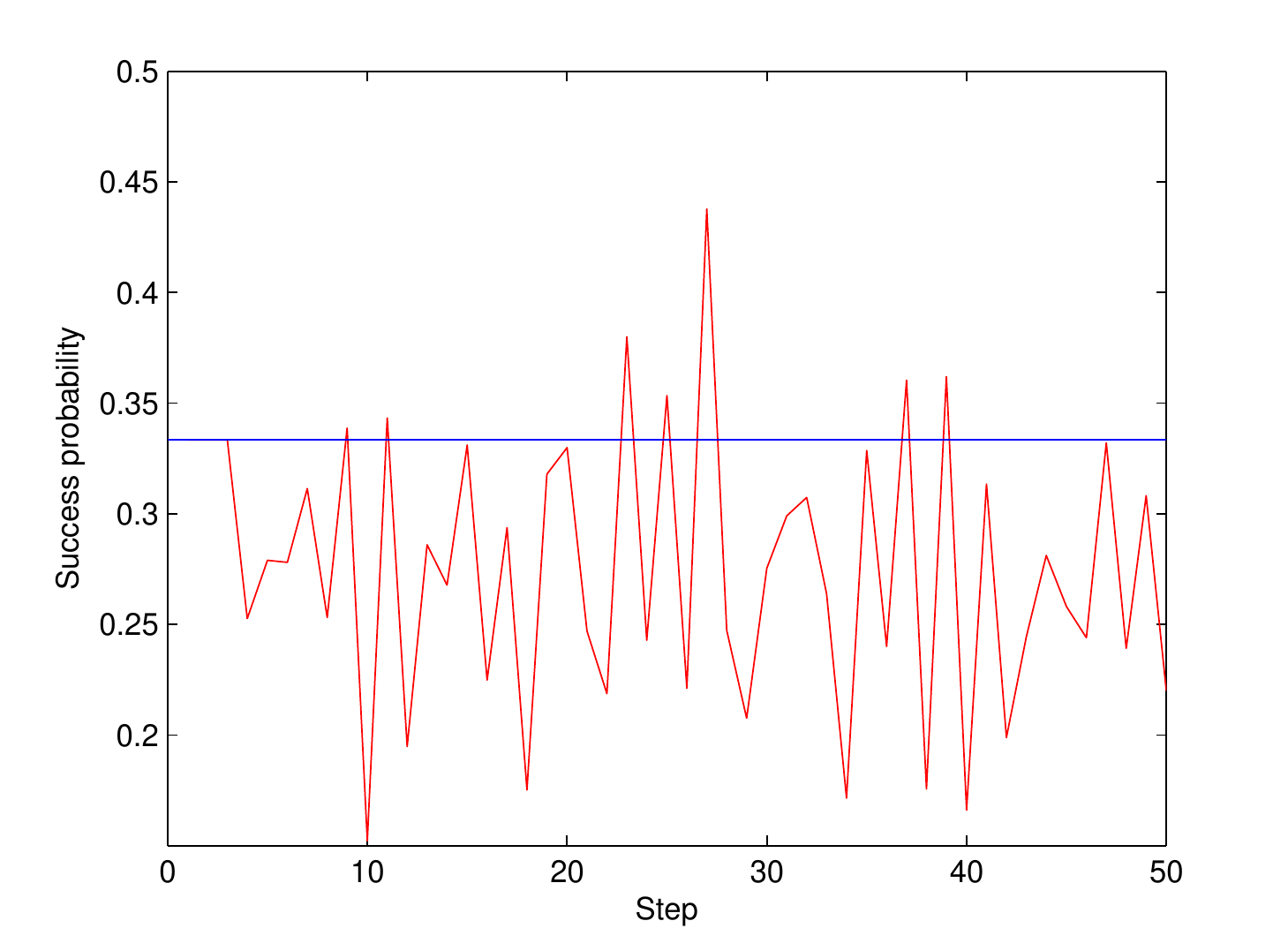}}
 \subfigure[]{\label{sub2c}
 \includegraphics[width=4.2cm]{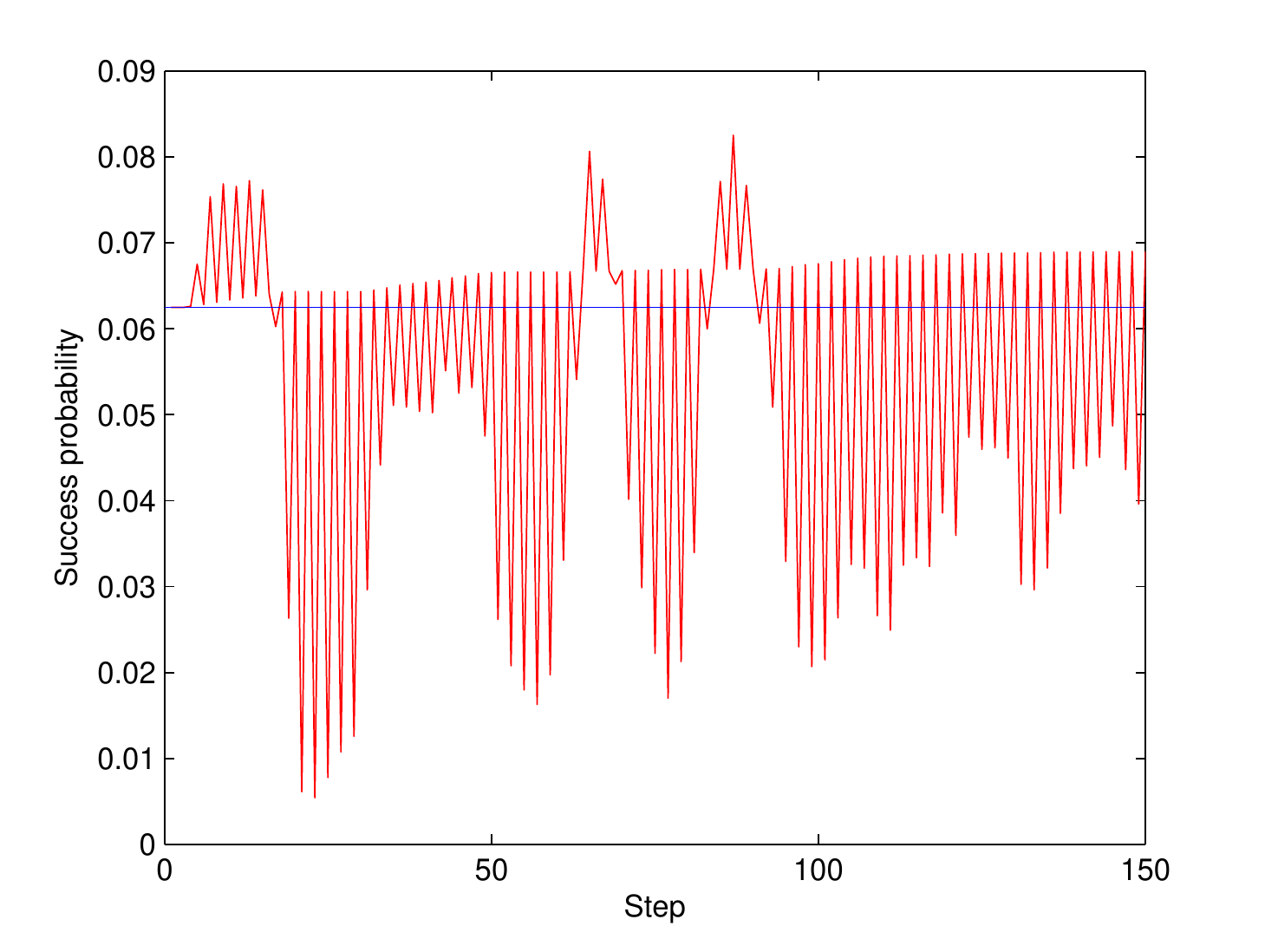}}
  \subfigure[]{\label{sub3c}
 \includegraphics[width=4.2cm]{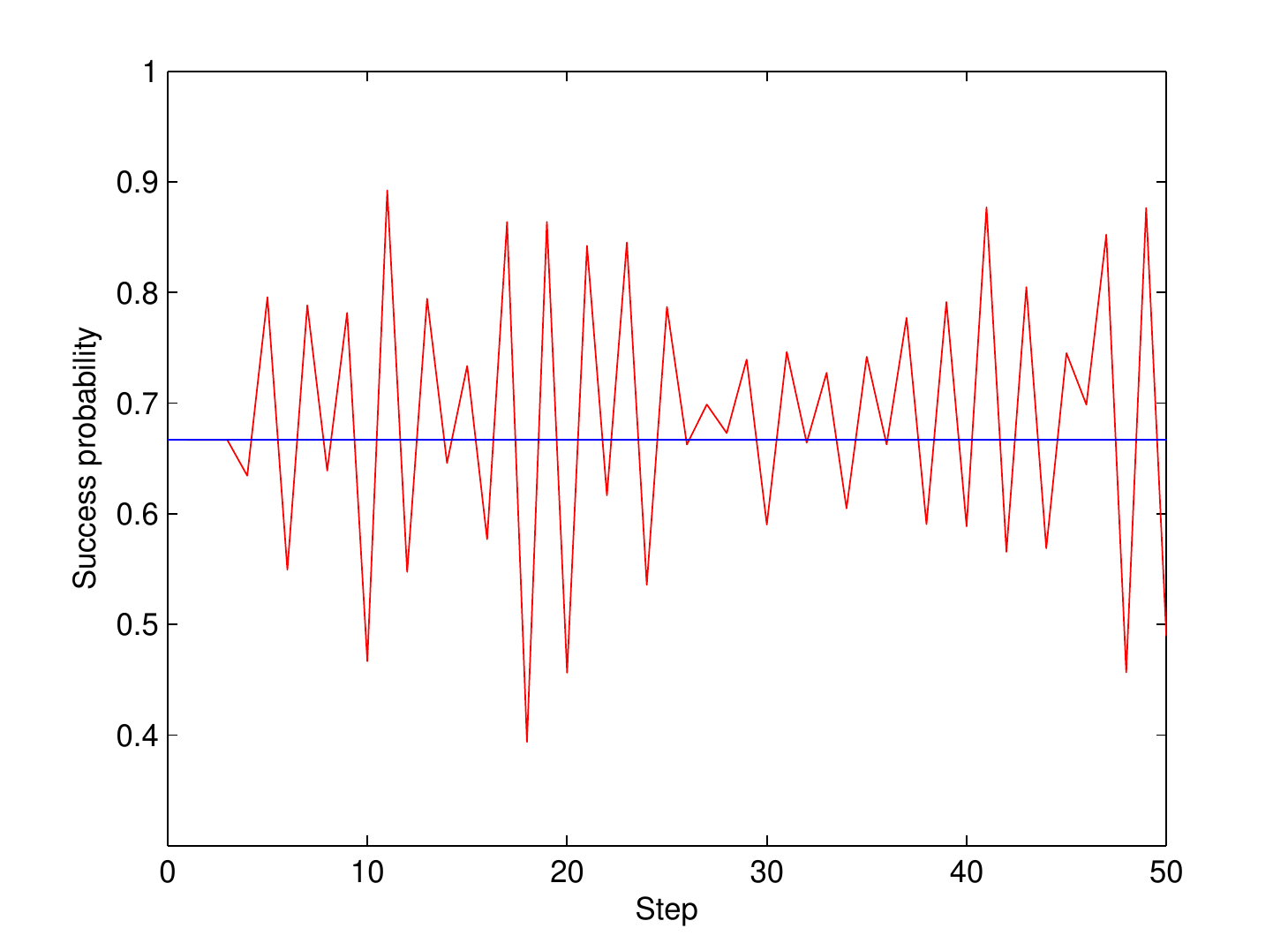}}
  \subfigure[]{\label{sub4c}
 \includegraphics[width=4.2cm]{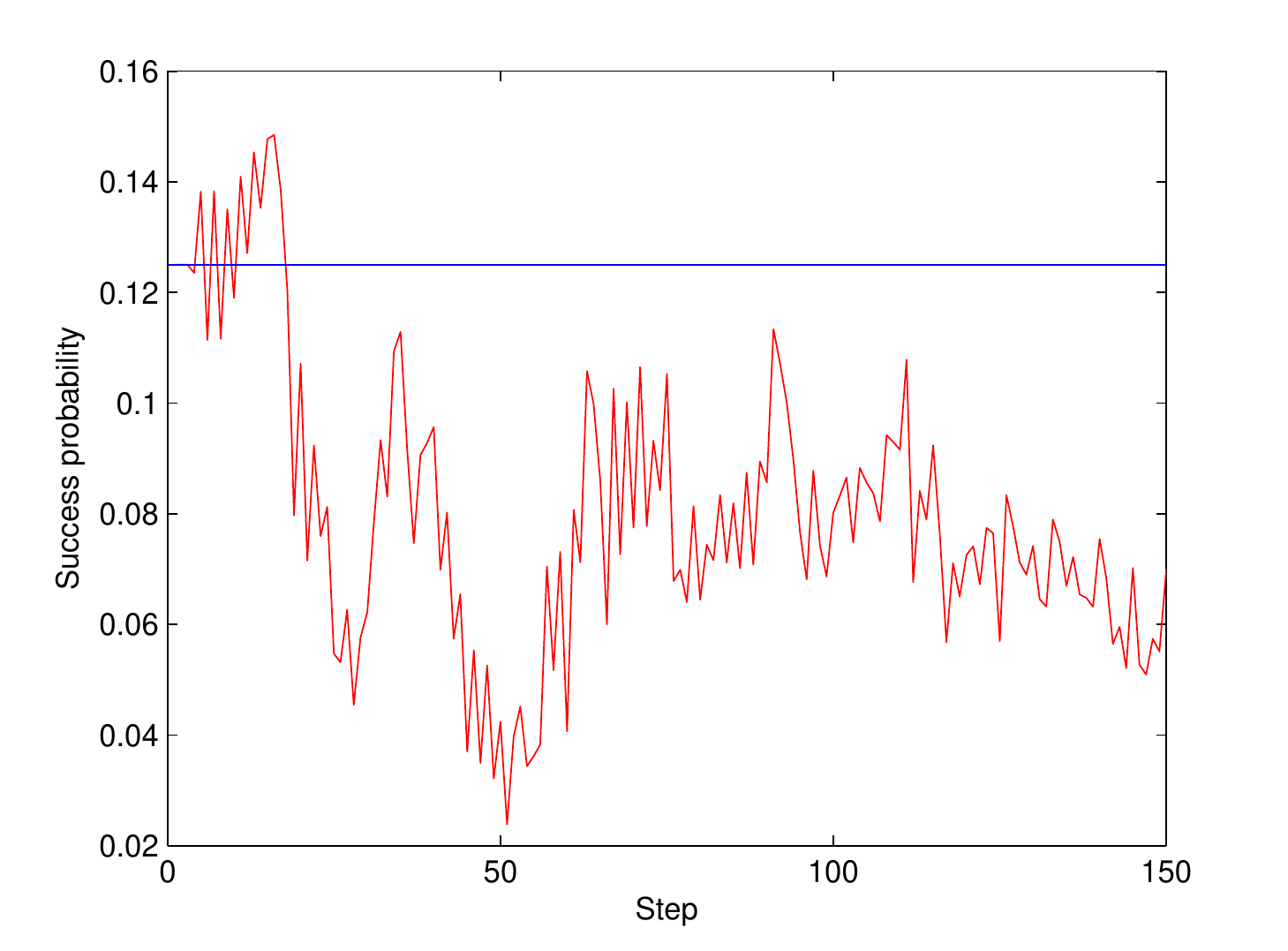}}
 \renewcommand{\figurename}{Figure}
 \caption{The success probability of finding the marked vertices. (a)(c): $3\times 3$ grid; (b)(d): $16\times 16$ grid. (a)-(b): $j-i=0$; (c)-(d): $j-i=0, 1$. The blue line corresponds to algorithm $\mathcal{A}$ and the red curve corresponds to algorithm $\mathcal{C}$.}
 \label{gifgrid}
 \end{figure}

According to Figure~\ref{gifgrid}, the success probability becomes shocked by using algorithm $\mathcal{C}$, compared with algorithm $\mathcal{A}$. For the algorithm $\mathcal{C}$, the success probability reaches a larger value.
The case when $j-i=0$ are shown in Figure \ref{sub1c} and \ref{sub2c}.
On the $3\times 3$ grid, for example, the success probability is about 0.45, and the success probability can reach 0.4377 after 27 steps, whereas the initial probability is 0.33. For the $16\times 16$ grid, the success probability exceeds 0.0800, and the success probability reaches 0.0825 after 87 steps, whereas the initial probability is 0.0625.
In addition, it can be seen from the Figure \ref{sub3c} and \ref{sub4c} that the success probability of algorithm $\mathcal{C}$ has also been improved when $j-i=0$ and $j-i=1$ in comparison to algorithm $\mathcal{A}$.
From this perspective, algorithm $\mathcal{C}$ is better than algorithm $\mathcal{A}$ for this exceptional configuration.  Although algorithm $\mathcal{C}$ does not do better than algorithm $\mathcal{B}$, it is still worth to study. It has already been found that the model has many applications~\cite{wang2017teleportation, shang2019st, shang2019experiment}. So it is an open problem for us to find more applications of this quantum walk model in designing algorithm.

\section{Quantum walk search on the cycle}
Quantum walk can take place on various graph to complete spatial search, in which the cycle is the most common and basic building block in quantum network and have spatial symmetries.
Thus quantum walk on cycle has also been studied extensively \cite{bednarska2003pla, solenov2006pra, sadowski2016jpa}.
Naturally, it is interesting to consider the exceptional configuration and generalized exceptional configuration of the search problem on one-dimensional $N$-cycle with one marked vertex.
In the following discussion, we use the flip-flop shift operator as $S$. Of course, if we use the moving shift operator, the conclusion still holds using the similar proof and thus we will not repeat.

\subsection{\texorpdfstring{$(X,Q)$-}\ Type quantum walk search on the \texorpdfstring{$N$-}\ cycle}
Here, we mainly consider $(G,-)$-type coined quantum walk, which is the most nature and popular. Specifically, it is $(X, Q)$-Type when the dimension of coin Hilbert space is 2, where $Q$ is an arbitrary coin operator and $X$ is the second-order Grover's diffusion transformation, i.e., Pauli matrix.
We find that one dimensional $N$-cycle with one marked vertex is a generalized exceptional configuration under the $(X,Q)$-Type quantum walk search.
We show it in the following theorem.

\begin{theorem}
For an arbitrary non-negative integer $t$, if we run the $(X,Q)$-Type quantum walk search on the $N$-cycle with one marked vertex for $t$ steps, the probability of finding a marked vertex is $\frac{1}{N}$.
\end{theorem}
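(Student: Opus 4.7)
The plan is to expand the state as $|\psi(t)\rangle = \sum_{x}(a_x(t)|0,x\rangle + b_x(t)|1,x\rangle)$ with initial data $a_x(0)=b_x(0)=1/\sqrt{2N}$, and to track the two amplitudes at the unique marked vertex $v^*$ directly, avoiding any need to diagonalize the full walk operator.

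The first step is to write down the amplitude update. Because the coin $X$ swaps the two coin components and the flip-flop shift sends $|0,x\rangle\mapsto|1,x-1\rangle$ and $|1,x\rangle\mapsto|0,x+1\rangle$, the operator $SC$ acts on an unmarked vertex as $|0,x\rangle\mapsto|0,x+1\rangle$ and $|1,x\rangle\mapsto|1,x-1\rangle$. In amplitude form, $a_x(t+1)=a_{x-1}(t)$ whenever $x-1\neq v^*$ and $b_x(t+1)=b_{x+1}(t)$ whenever $x+1\neq v^*$. At the two exceptional sites $x=v^*\pm 1$ the $Q$-coin at $v^*$ feeds in, giving $a_{v^*+1}(t+1)=Q_{10}a_{v^*}(t)+Q_{11}b_{v^*}(t)$ and $b_{v^*-1}(t+1)=Q_{00}a_{v^*}(t)+Q_{01}b_{v^*}(t)$. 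Intuitively, on the unmarked part of the cycle the walk splits into a pair of counter-propagating classical waves, and $v^*$ acts as a point scatterer.

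For $0\le t\le N-1$ I would unroll the recursion $a_{v^*}(t)=a_{v^*-1}(t-1)=\cdots=a_{v^*-t}(0)$. Since $v^*$ is the only marked vertex, every intermediate index $v^*-1,\ldots,v^*-t+1$ is unmarked, so each step is governed by the free-propagation rule and the chain terminates on the initial data $a_{v^*-t}(0)=1/\sqrt{2N}$. The same argument, applied to the left-moving piece, gives $b_{v^*}(t)=1/\sqrt{2N}$, and hence the probability at $v^*$ equals $1/N$ throughout the first $N$ steps.

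For $t\ge N$ the backward chain wraps all the way around the cycle, and its last hop lands on $v^*+1=v^*-(N-1)$, where the $Q$-update must be used. Carrying it out yields $a_{v^*}(t)=Q_{10}a_{v^*}(t-N)+Q_{11}b_{v^*}(t-N)$ and $b_{v^*}(t)=Q_{00}a_{v^*}(t-N)+Q_{01}b_{v^*}(t-N)$, so the vector $(b_{v^*}(t),a_{v^*}(t))^T$ equals $Q$ applied to $(a_{v^*}(t-N),b_{v^*}(t-N))^T$. Unitarity of $Q$ preserves the squared norm, hence $|a_{v^*}(t)|^2+|b_{v^*}(t)|^2=|a_{v^*}(t-N)|^2+|b_{v^*}(t-N)|^2$, and the success probability at $v^*$ is period-$N$. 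Induction on $\lfloor t/N\rfloor$ then lifts the base case to all $t\ge 0$. The main obstacle is the careful bookkeeping of modular indices in the back-trace and the verification that the $Q$-update is used at exactly one step per wrap-around; once the indexing is pinned down, the conclusion follows in one line from the unitarity of $Q$.
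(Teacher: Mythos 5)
Your proposal is correct, and while it rests on the same two underlying facts as the paper's argument --- away from the marked vertex the operator $S(X\otimes I)$ merely permutes amplitudes (right-movers shift right, left-movers shift left), and at the marked vertex the $2\times 2$ unitary $Q$ scatters the two local amplitudes without changing their combined squared norm --- it packages them quite differently. The paper works at the level of the full $2N\times 2N$ matrix: it defines the row sums $U^{t}(i)$, observes that $U^{t}|\varphi(0)\rangle$ is $\tfrac{1}{\sqrt{2N}}$ times the vector of row sums, and runs a one-step induction maintaining a family of $N$ paired invariants (each suitable pair of row sums has squared moduli summing to $2$). Only the pair at the marked vertex is needed for the success probability; the other $N-1$ pairs are carried along solely to close the induction, because after one step the marked-vertex pair is fed by a neighbouring pair. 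Your back-trace avoids that global bookkeeping: unrolling the free propagation over one full revolution yields the closed renewal relation $(b_{v^*}(t),a_{v^*}(t))^{T}=Q\,(a_{v^*}(t-N),b_{v^*}(t-N))^{T}$ involving only the two amplitudes at the marked vertex, so unitarity of $Q$ gives $N$-periodicity of the success probability in one line, with the uniform initial data supplying the base case $1/N$ on $0\le t\le N-1$. What you gain is locality and a cleaner physical picture (exactly one scattering event per revolution); what the paper's formulation gains is that the explicit row-sum vector $\overrightarrow{U^{t}}$ is reused later, in Theorem 6, to compute the $l_{1}$-norm coherence. One index to tighten: the sources of the hops in your backward chain for $a_{v^*}(t)$ are $v^{*}-1,\dots,v^{*}-t$, so it is the unmarkedness of $v^{*}-t$ as well (not only of $v^{*}-1,\dots,v^{*}-t+1$) that lets the chain terminate on the initial data; this holds exactly when $t\le N-1$, which is where you placed the case split, so nothing breaks.
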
\label{XQtheorem}

\begin{proof}
Without loss of generality, we could assume that the marked vertex is $|0\rangle$. Under the framework of $(X,Q)$-Type coined quantum walk, the flip-flop shift operator is Eq.~(\ref{flipflop}).
% S=|1\rangle \langle0| \otimes \sum_{k=0}^{N-1}|k-1\rangle \langle k| +|0\rangle \langle1| \otimes \sum_{k=0}^{N-1}|k+1\rangle \langle k|.
% \end{equation}
And the corresponding coin transformation is
 \begin{equation}
 C=X\otimes(I_{N}-|0\rangle\langle0|)+Q\otimes|0\rangle\langle0|.
 \end{equation}
 So the unitary operator is
\begin{equation}
  \begin{split}
U&=S\cdot C
 =|1\rangle\langle1| \otimes \sum_{k=1}^{N-1}|k-1\rangle\langle k| +(\sqrt{\rho}|1\rangle\langle0|+\sqrt{1-\rho}e^{i\theta}|1\rangle\langle1|)\otimes |N-1\rangle\langle0| \\
 & +|0\rangle\langle0|\otimes\sum_{k=1}^{N-1}|k+1\rangle\langle k|+(\sqrt{1-\rho}e^{i\phi}|0\rangle\langle0|-\sqrt{\rho}e^{i(\theta +\phi)}|0\rangle\langle1|)\otimes|1\rangle\langle0|,
  \end{split}
\end{equation}
it also could be seen as a $2N\times 2N$ matrix:
\begin{equation}\label{XQunitary}
U=
\left(
  \begin{array}{cccccccccc}
    0 & 0 & \cdots & 0 & 1 & 0 & 0 & 0 & \cdots & 0 \\
    \sqrt{1-\rho}e^{i\phi} & 0 & \cdots & 0 & 0 & -\sqrt{\rho}e^{i(\theta+\phi)} & 0 & 0 & \cdots & 0 \\
    0 & 1 & \cdots & 0 & 0 & 0 & 0 & 0 & \cdots & 0 \\
    \vdots & \vdots & \ddots & \vdots & \vdots & \vdots & \vdots & \vdots & \ddots & \vdots \\
    0 & 0 & \cdots & 1 & 0 & 0 & 0 & 0 & \cdots & 0 \\
    0 & 0 & \cdots & 0 & 0 & 0 & 1 & 0 & \cdots & 0 \\
    0 & 0 & \cdots & 0 & 0 & 0 & 0 & 1 & \cdots & 0 \\
    \vdots & \vdots & \ddots & \vdots & \vdots & \vdots & \vdots & \vdots & \ddots & \vdots \\
    0 & 0 & \cdots & 0 & 0 & 0 & 0 & 0 & \cdots & 1 \\
    \sqrt{\rho} & 0 & \cdots & 0 & 0 & \sqrt{1-\rho}e^{i\theta} & 0 & 0 & \cdots & 0
  \end{array}
\right).
\end{equation}
We denote the sum of the i-th row of matrix $U^{t}$ by $U^{t}(i)$, where $t$ is an arbitrary non-negative integer.
Next, we claim that $\left|U^{t}(1)\right|^{2}+\left|U^{t}(N+1)\right|^{2}=\left|U^{t}(N)\right|^{2}+\left|U^{t}(N+2)\right|^{2}
=\left|U^{t}(N-1)\right|^{2}+\left|U^{t}(N+3)\right|^{2}
=\cdots
=\left|U^{t}(3)\right|^{2}+\left|U^{t}(2N-1)\right|^{2}=\left|U^{t}(2)\right|^{2}+\left|U^{t}(2N)\right|^{2}=2$,
and give the proof in detail by induction.

First, when $t=1, 2$, we can get the conclusion by calculation directly.
Assume that the conclusion is established when $t=a$,
\begin{equation}
\left|\sum_{j=1}^{2N}a_{1,j}\right|^{2}+\left|\sum_{j=1}^{2N}a_{N+1,j}\right|^{2}
=\left|\sum_{j=1}^{2N}a_{N,j}\right|^{2}+\left|\sum_{j=1}^{2N}a_{N+2,j}\right|^{2}
=\cdots
=\left|\sum_{j=1}^{2N}a_{2,j}\right|^{2}+\left|\sum_{j=1}^{2N}a_{2N,j}\right|^{2}
=2,
\end{equation}
where $A=U^{a}=(a_{i,j})$ and $U=(u_{i,j})$.
When $t=a+1$, therefore,
\begin{equation}
  \begin{split}
&\left|U^{a+1}(2)\right|^{2}+\left|U^{a+1}(2N)\right|^{2}
=\left|\sum_{j=1}^{2N}\sum_{k=1}^{2N}u_{2,k}a_{k,j}\right|^{2}+\left|\sum_{j=1}^{2N}\sum_{k=1}^{2N}u_{2N,k}a_{k,j}\right|^{2}\\
&=\left|\sqrt{1-\rho}e^{i\phi}(\sum_{j=1}^{2N}a_{1,j})-\sqrt{\rho}e^{i(\theta+\phi)}(\sum_{j=1}^{2N}a_{N+1,j})\right|^2
+\left|\sqrt{\rho}(\sum_{j=1}^{2N}a_{1,j})+\sqrt{1-\rho}e^{i\theta}(\sum_{j=1}^{2N}a_{N+1,j})\right|^{2}\\
&=\left|\sum_{j=1}^{2N}a_{1,j}\right|^{2}+\left|\sum_{j=1}^{2N}a_{N+1,j}\right|^2=2,
  \end{split}
\end{equation}

\begin{equation}
\left|U^{a+1}(3)\right|^{2}+\left|U^{a+1}(2N-1)\right|^{2}=\left|\sum_{j=1}^{2N}a_{2,j}\right|^{2}+\left|\sum_{j=1}^{2N}a_{2N,j}\right|^2=2,
\end{equation}
$$\cdots\cdots$$
\begin{equation}
\left|U^{a+1}(N)\right|^{2}+\left|U^{a+1}(N+2)\right|^{2}=\left|\sum_{j=1}^{2N}a_{N-1,j}\right|^{2}+\left|\sum_{j=1}^{2N}a_{N+3,j}\right|^2=2,
\end{equation}
\begin{equation}
\left|U^{a+1}(1)\right|^{2}+\left|U^{a+1}(N+1)\right|^{2}=\left|\sum_{j=1}^{2N}a_{N,j}\right|^{2}+\left|\sum_{j=1}^{2N}a_{N+2,j}\right|^2=2,
\end{equation}
our assertion is established.

The initial state of particle is
\begin{equation} \label{cycleinitial}
|\varphi(0)\rangle=\frac{1}{\sqrt{2N}}\sum_{d=0}^{1}\sum_{x=0}^{N-1}|d,x\rangle=\frac{1}{\sqrt{2N}}\sum_{x=0}^{N-1}(|0,x\rangle+|1,x\rangle).
\end{equation}
After $t$ steps of this $(X,Q)$-Type coined quantum walk,  the state evolves to
\begin{equation} \label{cyclefinal}
|\varphi(t)\rangle=U^{t}|\varphi(0)\rangle=\frac{1}{\sqrt{2N}}(U^{t}(1),U^{t}(2),\cdots,U^{t}(2N-1),U^{t}(2N))^{T}.
\end{equation}
So the success probability of finding the marked location $|0\rangle$ is
\begin{equation}
\sum_{d=0}^{1}\left|\langle d,0|\varphi(t)\rangle \right|^{2}
 =\left|\langle0,0|U^{t}|\varphi(0)\rangle \right|^{2}+\left| \langle1,0|U^{t}|\varphi(0)\rangle \right|^{2}
 =\frac{1}{2N}(\left|U^{t}(1) \right|^{2}+\left|U^{t}(N+1) \right|^{2})
 =\frac{1}{N}.
\end{equation}

\end{proof}

\begin{remark}
Wong~\cite{ThomasRM17} deduced the conclusion that the two-dimensional grid with a marked diagonal by $(G,-I)$-Type quantum walk can be reduced to a one-dimensional cycle with one marked node by $(X,-I)$-Type quantum walk, and naturally reproduced Ambainis's results~\cite{ambainis2008exceptional} by this kind of reduction.

Similarly, the problem of searching on the two-dimensional grid with a marked diagonal under $(G,F)$-Type quantum walk corresponds to the problem of searching on one-dimensional cycle for single marked vertex by the $(X,H)$-Type quantum walk.
However, the former is not a generalized exceptional configuration through numerical simulation from Figure~\ref{gfgrid}. And the later is a generalized exceptional configuration shown in Theorem 3.
Thus, we cannot obtain the reduction presented by Wong~\cite{ThomasRM17}.
So this kind of higher-dimensional generalization or reduction is related to the choice of coin operator and the class of exceptional configurations.
\end{remark}

However, we will show that the $(X,H)$-Type quantum walk with one marked vertex on one-dimensional cycle is not an exceptional configuration in the following.

\subsection{\texorpdfstring{$(X,H)$-}\ Type quantum walk search on the \texorpdfstring{$N$-}\ cycle}
In order to distinguish exceptional configuration from generalized exceptional configuration, let us consider searching on the $N$-cycle with one marked vertex under $(X,H)$-Type quantum walk, where $H$ represents Hadamard operator.

\begin{theorem}
One-dimensional $N$-cycle with one marked vertex is not an exceptional configuration but a generalized exceptional configuration under $(X,H)$-Type quantum walk search.
\end{theorem}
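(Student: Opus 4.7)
The plan has two independent halves, corresponding to the two assertions in the theorem.

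The generalized exceptional part is essentially free. Since the Hadamard matrix $H$ is an instance of the general coin $Q$ of Section~2 (taking $\rho=\tfrac{1}{2}$, $\theta=\phi=0$), the $(X,H)$-Type walk is a special case of the $(X,Q)$-Type walk. Theorem~3 then tells us that the success probability of finding the marked vertex is $\tfrac{1}{N}$ at every step $t$, which is exactly the initial probability. By Definition~2 this makes the configuration a generalized exceptional configuration. I would write this as a single-sentence invocation of Theorem~3 with $Q=H$.

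For the non-exceptional half, the strategy is to show that the state after one step is not merely the initial state with a flipped global sign. Concretely, I plan to compute $|\varphi(1)\rangle = U|\varphi(0)\rangle$ explicitly at a single, carefully chosen basis state and exhibit an amplitude whose \emph{modulus} differs from $\frac{1}{\sqrt{2N}}$. The key observation is that on unmarked vertices the coin $X$ merely permutes the two coin basis states, so the uniform amplitude $\frac{1}{\sqrt{2N}}$ is preserved there by the coin step; but on the marked vertex $x=0$, the Hadamard $H$ applied to the uniform coin state $\frac{1}{\sqrt{2}}(|0\rangle+|1\rangle)$ produces $|0\rangle$, concentrating the amplitude onto a single coin direction. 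Specifically, after the coin the marked vertex carries $\frac{1}{\sqrt{N}}|0,0\rangle$ (and zero on $|1,0\rangle$), and after the flip-flop shift of Eq.~(\ref{flipflop}) this moves to $\frac{1}{\sqrt{N}}|1,N-1\rangle$. Since no other contribution reaches $|1,N-1\rangle$ through the shift, the amplitude of $|1,N-1\rangle$ in $|\varphi(1)\rangle$ equals $\frac{1}{\sqrt{N}}$, which is $\sqrt{2}$ times the initial amplitude $\frac{1}{\sqrt{2N}}$. Hence $|\varphi(1)\rangle$ cannot equal $\pm|\varphi(0)\rangle$, and by Definition~1 the configuration is not an exceptional configuration.

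The main obstacle is simply bookkeeping: I must be careful with the flip-flop shift convention and with the mod-$N$ index to ensure no ``accidental'' cancellation occurs at the chosen basis state $|1,N-1\rangle$. Once it is verified that the only contribution to this amplitude comes from the coin-shift chain starting at the marked vertex (because unmarked vertices route amplitude from $|0,x\rangle$ into $|1,x-1\rangle$ with modulus $\frac{1}{\sqrt{2N}}$, never into $|1,N-1\rangle$ from $x\neq 0$), the contradiction with the exceptional condition is immediate. No spectral analysis, no induction, and no appeal to periodicity is required; a single-step counterexample suffices.
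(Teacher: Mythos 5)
Your proposal is correct and follows essentially the same route as the paper: the generalized-exceptional half is the same one-line specialization of Theorem~3 to $Q=H$, and the non-exceptional half is the same single-step computation, where the paper writes out $|\varphi(1)\rangle$ in full and you isolate the amplitude $\frac{1}{\sqrt{N}}$ at $|1,N-1\rangle$. One small remark: the decisive fact is the modulus discrepancy $\frac{1}{\sqrt{N}}\neq\frac{1}{\sqrt{2N}}$, which rules out any componentwise sign-flipping evolution (the notion in Definition~1), not merely $|\varphi(1)\rangle\neq\pm|\varphi(0)\rangle$; your computation already establishes this, so only the final phrasing should be strengthened.
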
\label{XHtheorem}
%For an arbitrary non-negative integer $t$, if we run the $(X,H)$-Type quantum walk search on the N-cycle with one marked location for $t$ steps, the probability of finding a marked location is $\frac{1}{N}$.

\begin{proof}
First, we show that $(X,H)$-Type coined quantum walk search on the $N$-cycle with one marked vertex is not an exceptional configuration.
Without loss of generality, we could assume that the marked vertex is $|0\rangle$.
For the initial state of the particle
\begin{equation} \label{cycleinitialXH}
|\varphi(0)\rangle=\frac{1}{\sqrt{2N}}\sum_{d=0}^{1}\sum_{x=0}^{N-1}|d,x\rangle=\frac{1}{\sqrt{2N}}(|0,0\rangle+|1,0\rangle+\sum_{d=0}^{1}\sum_{x=1}^{N-1}|d,x\rangle),
\end{equation}
after one step $(X,H)$-Type quantum walk, it evolves into
\begin{equation}
|\varphi(1)\rangle=\frac{1}{\sqrt{2N}}(\sqrt{2}|1,N-1\rangle+\sum_{x=1}^{N-1}(|1,x-1\rangle+|0,x+1\rangle)).
\end{equation}
We can see clearly that the system does not just evolve by flipping signs, so it is not an exceptional configuration.

Next, when the parameters $\rho$, $\theta$ and $\phi$ of $Q$ are $\frac{1}{2}$, 0 and 0, the $(X,Q)$-Type coined quantum walk is exactly $(X,H)$-Type. So $(X,H)$-Type coined quantum walk search on the one dimensional $N$-cycle with one marked location is a generalized exceptional configuration.
\end{proof}

\begin{remark}

In addition to $(X,H)$-Type quantum walk search, there are also many examples than can prove exceptional configuration is a true subclass of generalized exceptional configuration.
When we run the $(X,Q)$-Type quantum walk search on the $N$-cycle with one marked vertex, where
the initial state shown in equation (\ref{cycleinitial}) will evolve into
\begin{equation}
|\varphi(1)\rangle=\frac{1}{\sqrt{2N}}((\sqrt{\rho}+\sqrt{1-\rho}e^{i\theta})|1,N-1\rangle
    +(\sqrt{1-\rho}e^{i\phi}-\sqrt{\rho}e^{i(\theta+\phi)})|0,1\rangle
    +\sum_{x=0}^{N-2}|1,x\rangle+\sum_{x=2}^{N}|0,x\rangle)
\end{equation}
after one-step quantum walk.
So as long as $\sqrt{\rho}+\sqrt{1-\rho}e^{i\theta}\neq\pm1$ and $\sqrt{1-\rho}e^{i\phi}-\sqrt{\rho}e^{i(\theta+\phi)}\neq\pm1$, it is not an exceptional configuration but a generalized exceptional configuration.

From the above results, we can see the difference between generalized exceptional configuration and exceptional configuration. In particular,
Theorem 3 does not hold for exceptional configuration in general.
And exceptional configuration is a true subclass of generalized exceptional configuration.
\end{remark}

For the problem of quantum walk-based search algorithm with only one marked vertex, Portugal put forward a method called principle eigenvalue technique in chapter 9 of Ref. \cite{Portugal2018qwsearch}, which can analyze the time complexity effectively.
Unfortunately, it does not fit to solve our problem. There are two reasons.
First, this technique can only give an approximately asymptotic solution. Second, the unitary operator of $(X, Q)$-Type coined quantum walk cannot necessarily be written as the product of original evolutionary operator and the reflection operator.

%$U=S(X\otimes I)$
%$U'=S(X\otimes (I-|0\rangle\langle 0|)+H\otimes|0\rangle\langle0|)=U(I-(a|a\rangle\langle a|+b|b\rangle\langle b|))$
%$a=1-\frac{1}{\sqrt{2}}+\frac{1}{\sqrt{2}}i$, \ \ \
%$b=1-\frac{1}{\sqrt{2}}-\frac{1}{\sqrt{2}}i$, \ \ \
%$|a\rangle=\frac{1}{\sqrt{2}}(-i|0\rangle+|1\rangle)$, \ \ \
%$|b\rangle=\frac{1}{\sqrt{2}}(i|0\rangle+|1\rangle)$ \ \ \

\section{Dynamics of coherence}
Quantum coherence is an important property of quantum state which has no classical counterpart.
In this section, we mainly consider the dynamics of quantum coherence in the exceptional configuration and generalized exceptional configuration which have been discussed in Theorem 1 and 3.
Since the form of $l_{1}$ norm coherence \cite{baumgratz2014prl} is simple and convenient to calculate, we use $l_{1}$ norm coherence to study the evolution of quantum coherence over time.
The $l_{1}$ norm coherence of a given density operator $\rho$ is described as
\begin{equation}\label{l1norm}
C(\rho)=\sum_{i\neq j}|\rho_{ij}|=\sum_{i\neq j}|\langle i|\rho|j\rangle|.
\end{equation}
The upper bound of the $l_{1}$ norm coherence for any quantum state in a $d$ dimensional system is $d-1$.
Now, we will show the evolution of $l_{1}$ norm coherence about exceptional configuration considered in Theorem 1.

\begin{theorem}
For the exceptional configuration discussed in Theorem 1, the $l_{1}$ norm coherence does not change over time and takes the maximum.
\end{theorem}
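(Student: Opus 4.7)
The plan is to leverage the structural conclusion of Theorem 1 directly: the state $|\psi(t)\rangle$ produced by the AKR walk on the exceptional configuration differs from the initial equal superposition $|\psi(0)\rangle$ only by sign flips on the coefficients of the basis states $|\psi_{\leftarrow,j}\rangle$ and $|\psi_{\rightarrow,j}\rangle$. Expanding these back into the computational basis $\{|d,i,j\rangle\}$, every amplitude $a_{d,i,j}(t)$ of $|\psi(t)\rangle$ therefore has the same modulus $|a_{d,i,j}(t)| = \frac{1}{2N}$ at every time $t$, since the magnitudes of the expansion coefficients are preserved and the basis vectors $|\psi_{\leftarrow,j}\rangle, |\psi_{\rightarrow,j}\rangle$ have disjoint support in the computational basis with uniform magnitude $\frac{1}{\sqrt{2N}}$.

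First I would set $\rho(t) = |\psi(t)\rangle\langle\psi(t)|$ and note that for a pure state the off-diagonal entries in the computational basis are $\rho_{(d,i,j),(d',i',j')}(t) = a_{d,i,j}(t)\, a_{d',i',j'}^*(t)$, so their absolute values factor as products of the moduli of the amplitudes. Then the $l_1$-norm coherence (\ref{l1norm}) simplifies to the clean expression
\begin{equation}
C(\rho(t)) = \sum_{\mu \neq \nu} |a_\mu(t)|\,|a_\nu(t)| = \Bigl(\sum_\mu |a_\mu(t)|\Bigr)^2 - \sum_\mu |a_\mu(t)|^2,
\end{equation}
where $\mu,\nu$ range over the $4N^2$ basis states.

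Next I would plug in the uniform value $|a_\mu(t)| = \frac{1}{2N}$: the first sum equals $4N^2 \cdot \frac{1}{2N} = 2N$ and the second equals $4N^2 \cdot \frac{1}{4N^2} = 1$, yielding $C(\rho(t)) = 4N^2 - 1$. Since this value is independent of $t$, the coherence is invariant under the evolution. Finally, because the total Hilbert space has dimension $d = 4N^2$, the general upper bound $d-1 = 4N^2 - 1$ is attained, so the coherence is simultaneously constant and maximal.

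There is no real obstacle here: Theorem 1 already does the heavy lifting by guaranteeing the sign-only evolution, and the rest is a direct calculation using the pure-state factorization of $|\rho_{\mu\nu}|$. The only subtlety to verify cleanly is that the decomposition $|\psi(t)\rangle = \frac{1}{\sqrt{2N}}\sum_j (\pm|\psi_{\leftarrow,j}\rangle \pm |\psi_{\rightarrow,j}\rangle)$ from the proof of Theorem 1 indeed places amplitude of modulus exactly $\frac{1}{2N}$ on every one of the $4N^2$ computational basis vectors (i.e.\ no cancellations or coincidences), which is immediate from the disjoint-support observation above.
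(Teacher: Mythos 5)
Your proposal is correct and follows essentially the same route as the paper: invoke Theorem 1's conclusion that the evolution is sign-flip-only, observe that the $l_{1}$ norm coherence depends only on the moduli of the amplitudes, and conclude the value stays at $4N^{2}-1$, the maximum for dimension $4N^{2}$. Your version is in fact slightly more careful, since you explicitly verify via the disjoint supports of $|\psi_{\leftarrow,j}\rangle$ and $|\psi_{\rightarrow,j}\rangle$ that every computational-basis amplitude has modulus $\frac{1}{2N}$ at all times, whereas the paper simply computes $C(\rho_{0})$ (with a small typo writing $\rho_{0}$ as $\frac{1}{4N^{2}}I$ rather than the all-ones matrix) and appeals to the sign-flip property.
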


\begin{proof}
For the initial state
\begin{equation}
|\psi(0)\rangle=\frac{1}{\sqrt{4N^{2}}}\sum_{d,i,j}|d,i,j\rangle
=\frac{1}{\sqrt{4N^{2}}}(1,1,\cdots,1)^{T},
\end{equation}
the corresponding density matrix is $\rho_{0}=|\psi(0)\rangle\langle\psi(0)|=\frac{1}{4N^{2}}I$.
According to the definition of $l_{1}$ norm coherence shown in Equation \ref{l1norm}, the $l_{1}$ norm coherence of initial state can be calculated as
\begin{equation}
C(\rho_{0})=\sum_{i\neq j}|\langle i|\rho_{0}|j\rangle|=\frac{1}{4N^{2}}4N^{2}(4N^{2}-1)=4N^{2}-1.
\end{equation}
As presented in Theorem 1, the system only evolves by flipping signs. Thus the $l_{1}$ norm coherence of quantum state at any time is the same as $C(\rho_{0})$, which takes the maximum.
\end{proof}

\begin{remark}
Combined with the above analysis, we find that the $l_{1}$ norm coherence does not change over time for any exceptional configuration.
It doesn't depend on what kind of graph we are actually doing the quantum walk on.
Thus, the case of $N\times N$-grid in Theorem 1 is only a special case.
%For exceptional configuration, the system evolves by flipping signs.
%Therefore, the $l_{1}$ norm coherence does not change over time for any exceptional configuration.
%And the case in Theorem \ref{extensiontheorem} is only a special case.
\end{remark}

Next, we will analyze the dynamics of $l_{1}$ norm coherence for generalized exceptional configuration in the case discussed in Theorem 3. For a competitive quantum walk search on the $N$-cycle with one marked vertex, the number of steps required should not exceed $N$. So let us talk about the change in $l_{1}$ norm coherence within $N$ steps.

\begin{theorem}
For the generalized exceptional configuration discussed in Theorem 3, the $l_{1}$ norm coherence of state $|\psi(t)\rangle$ will be as follows when $t<N$:
\begin{equation}
C(\rho_{t})=\frac{[(2N-2t)+t(|p|+|q|)]^{2}}{2N}-1,
\end{equation}
where $p=\sqrt{1-\rho}e^{i\phi}-\sqrt{\rho}e^{i(\theta+\phi)}$ and $q=\sqrt{\rho}+\sqrt{1-\rho}e^{i\theta}$.
\end{theorem}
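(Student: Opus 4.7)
The plan is to first determine $|\psi(t)\rangle$ explicitly in the computational basis for every $t<N$, and then substitute the amplitudes into the pure-state form of the $l_{1}$-norm coherence. Write $|\psi(t)\rangle=\frac{1}{\sqrt{2N}}\sum_{d,x}a_{d,x}^{(t)}|d,x\rangle$ with $a_{d,x}^{(0)}=1$. The recursion $|\psi(t+1)\rangle=SC|\psi(t)\rangle$ takes a very clean form. Because $X$ merely swaps the two coin directions, the flip-flop shift acts as pure translation on unmarked sites, giving $a_{0,x}^{(t+1)}=a_{0,x-1}^{(t)}$ for $x\neq 1$ and $a_{1,x}^{(t+1)}=a_{1,x+1}^{(t)}$ for $x\neq N-1$. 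The only non-trivial amplitudes are the two that emanate from applying $Q$ at $x=0$ and then shifting, namely $a_{0,1}^{(t+1)}=\sqrt{1-\rho}\,e^{i\phi}a_{0,0}^{(t)}-\sqrt{\rho}\,e^{i(\theta+\phi)}a_{1,0}^{(t)}$ and $a_{1,N-1}^{(t+1)}=\sqrt{\rho}\,a_{0,0}^{(t)}+\sqrt{1-\rho}\,e^{i\theta}a_{1,0}^{(t)}$.

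Using these recursions I would prove by induction on $t$ that, as long as $t<N$,
\begin{equation*}
a_{0,0}^{(t)}=1,\quad a_{0,x}^{(t)}=\begin{cases}p, & 1\le x\le t,\\ 1, & t+1\le x\le N-1,\end{cases}\quad a_{1,x}^{(t)}=\begin{cases}1, & 0\le x\le N-t-1,\\ q, & N-t\le x\le N-1.\end{cases}
\end{equation*}
Intuitively, the direction-$0$ sector moves one step clockwise per tick and the direction-$1$ sector one step counter-clockwise; every passage through the marked vertex multiplies the outgoing amplitudes by $p$ or $q$, producing a growing arc of $p$-entries to the right of $x=0$ and a symmetric arc of $q$-entries to the left. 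The constraint $t<N$ is exactly what guarantees that neither arc has wrapped back around to re-enter the marked vertex, so $a_{0,0}^{(s)}=a_{1,0}^{(s)}=1$ throughout the induction and no secondary scattering occurs. A quick unitarity check recovers $|p|^{2}+|q|^{2}=2$, which makes the norm of $|\psi(t)\rangle$ equal to $1$.

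With the amplitudes in hand, the coherence computation is almost immediate: for any pure state $|\psi\rangle=\sum_{i}c_{i}|i\rangle$,
\begin{equation*}
C(\rho)=\sum_{i\neq j}|c_{i}c_{j}^{*}|=\left(\sum_{i}|c_{i}|\right)^{2}-\sum_{i}|c_{i}|^{2}=\left(\sum_{i}|c_{i}|\right)^{2}-1.
\end{equation*}
Tallying the moduli of the $2N$ amplitudes above yields $2N-2t$ ones, $t$ copies of $|p|$, and $t$ copies of $|q|$, each carrying the prefactor $1/\sqrt{2N}$; hence $\sum_{i}|c_{i}|=[(2N-2t)+t(|p|+|q|)]/\sqrt{2N}$. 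Squaring and subtracting $1$ produces exactly the claimed formula.

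The main obstacle is the careful bookkeeping in the inductive step near the marked vertex: one must verify that at each time the two ``source'' amplitudes $a_{0,0}^{(t)}$ and $a_{1,0}^{(t)}$ really equal $1$, so that the special formulas produce precisely $p$ and $q$ rather than some hybrid, and that the ranges of the $p$-arc and $q$-arc advance by exactly one site per step without colliding at $x=0$ before time $N$. Once this is in place, the rest is the short algebraic simplification above.
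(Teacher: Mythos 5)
Your proposal is correct and follows essentially the same route as the paper: the paper tracks the row sums $U^{t}(i)$ of $U^{t}$ (which are exactly your amplitudes $a_{d,x}^{(t)}$ up to the $1/\sqrt{2N}$ prefactor), establishes by induction the same structure of $2N-2t$ ones, $t$ copies of $p$ and $t$ copies of $q$ for $t<N$, and then applies the identical pure-state identity $C(\rho_t)=\frac{1}{2N}(\|\overrightarrow{U^{t}}\|_{1}^{2}-2N)$.
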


\begin{proof}
In order to maintain the consistency of symbols, we use the notation in Theorem 3. At time $t$, the state evolves into  $|\varphi(t)\rangle=U^{t}|\varphi(0)\rangle=\frac{1}{\sqrt{2N}}(U^{t}(1),U^{t}(2),\cdots,U^{t}(2N-1),U^{t}(2N))^{T}$, where $U$ is the unitary operator represented in Equation \ref{XQunitary}. Here, we denote $(U^{t}(1),U^{t}(2),\cdots,U^{t}(2N-1),U^{t}(2N))^{T}$ as vector $\overrightarrow{U^{t}}$.
So the density matrix of the whole system at time $t$ can be written as $\rho_{t}=\frac{1}{2N} \overrightarrow{U^{t}}\overrightarrow{U^{t}}^{\dagger}$.

Note that $\overrightarrow{U^{1}}=(1,p,1,\cdots,1,1,\cdots,1,q)^{T}$, $\overrightarrow{U^{2}}=(1,p,p,1,\cdots,1,1,\cdots,1,q,q)^{T}$, %$\overrightarrow{U^{3}}=(1,p,p,p,1,\cdots,1,1,\cdots,1,q,q,q)^{T}$
and so on, we have
%$\overrightarrow{U^{k+1}}=(U^{k}(N), \sqrt{1-\rho}e^{i\phi}U^{k}(1)-\sqrt{\rho}e^{i(\theta+\phi)}U^{k}(N+1),
%U^{k}(2),U^{k}(3),\cdots,U^{k}(N-1),U^{k}(N+2),U^{k}(N+3),\cdots,
%U^{k}(2N),\sqrt{\rho}U^{k}(1)+\sqrt{1-\rho}e^{i\theta}U^{k}(N+1))^{T}$.
\begin{equation}
\begin{split}
\overrightarrow{U^{k+1}}=&(U^{k}(N), \sqrt{1-\rho}e^{i\phi}U^{k}(1)-\sqrt{\rho}e^{i(\theta+\phi)}U^{k}(N+1),
U^{k}(2),U^{k}(3),\cdots,U^{k}(N-1),U^{k}(N+2),U^{k}(N+3),\cdots\\
&\cdots,U^{k}(2N),\sqrt{\rho}U^{k}(1)+\sqrt{1-\rho}e^{i\theta}U^{k}(N+1))^{T}.
\end{split}
\end{equation}
By mathematical induction and simple calculation, we can get
\begin{equation}
U^{t}(1)=U^{t}(N+1)=1, \ \ \ \ \|\overrightarrow{U^{t}}\|_{1}=(2N-2t)+t(|p|+|q|),
\end{equation}
where $t<N$.
Therefore, the $l_{1}$ norm coherence at time $t$ can be calculated as
\begin{equation}
\begin{split}
C(\rho_{t})&=\frac{1}{2N}(\sum_{i,j=1}^{2N}|U^{t}(i)|\cdot|U^{t}(j)|-\sum_{i=1}^{2N}|U^{t}(i)|^{2})\\
&=\frac{1}{2N}(\|\overrightarrow{U^{t}}\|_{1}^{2}-2N)\\
&=\frac{[(2N-2t)+t(|p|+|q|)]^{2}}{2N}-1.
\end{split}
\end{equation}

\end{proof}

\begin{remark}
Different from the exceptional configuration, the quantum coherence will change over time for the generalized exceptional configuration. Thus, in view of quantum coherence, we can distinguish between exceptional configuration and generalized exceptional configuration.
\end{remark}

In conclusion, the $l_{1}$ norm coherence changes over time in the case of generalized exceptional configuration.
And the $l_{1}$ norm coherence in exceptional configuration does not change over time and remains at its maximum.
In term of success probability, the exceptional configuration means that the success probability does not grow over time.
%the failure of the search algorithm.
So it is natural for $l_{1}$ norm coherence to be high. Also, this is also consistent with previous coherence results about quantum algorithms \cite{shi2017pra, Liu2019Ent}.

\section{Summary}
Here, we mainly consider the exceptional configurations on two-dimensional grid and generalized exceptional configurations on one-dimensional cycle under the framework of coined quantum walk.

For one thing, we find a wider class of exceptional configurations on $N \times N$ grid for the AKR algorithm. To be more specific, we show that the system only evolves by flipping signs and thus the success probability does not change if the difference or sum between horizontal and vertical coordinates of marked locations equals to a fixed integer. When the difference is equal to zero, it happens to be the known diagonal configuration.
Also, a combination of $j-i=\alpha$ and $j-i=\beta$ would result in a new exceptional configuration when $|\alpha-\beta|=\lfloor\frac{N}{2}\rfloor$, where $(i,j)$ is the coordinate of the marked vertex.
Furthermore, we develop two kinds of modified quantum walks to do search successfully in the exceptional configurations in terms of numerical simulation.

For another thing, we find that the one-dimensional $N$-cycle with one marked vertex is not necessarily an exceptional configuration but a generalized exceptional configuration under $(X,Q)$-Type quantum walk search, where $X$ is Pauli matrix and $Q$ is an arbitrary coin operator.
This warns us not to apply the Grover's diffusion operator on the unmarked vertices if we want to search on one-dimensional cycle for one marked vertex.

Furthermore, we analyze the evolution of $l_{1}$ norm coherence in the two cases mentioned above. We find that, to some extent, coherence can be used as an indicator to distinguish exceptional configuration from generalized exceptional configuration.

This is the first time to come up with the concept ``generalized exceptional configuration" which enlarges the class of exceptional configuration. Taking these results into consideration, we could design more accurate and efficient quantum algorithms in the future.
%Certainly, there are many quantum algorithms now and which problems can be solved efficiently by them ?
Certainly, there are many quantum algorithms now. But what problems can they solve effectively?
By the above exceptional configuration or the generalized exceptional configuration, we may provide some division standard in the future. In addition, it is possible for us to find relevant potential applications based on theses fundamental performances which are so different with classical case.

\section{Acknowledgements}
Thanks to the referees for their useful suggestions.
We thank the support of National Key Research and Development Program of China under grant 2016YFB1000902, National Natural Science Foundation of China (Grant No.61472412, 61872352), and Program for Creative Research Group of National Natural Science Foundation of China (Grant No. 61621003).


\begin{thebibliography}{99}
\bibitem{ambainis2005coins}
Ambainis A, Kempe J, Rivosh A.
Coins make quantum walks faster.
In: Proceedings of the 16th ACM-SIAM Symposium on Discrete Algorithms, 2005, 1099-1108.


\bibitem{childs2003exponential}
Childs A M, Cleve R, Deotto E, et al.
Exponential algorithmic speedup by a quantum walk.
In: Proceedings of the thirty-fifth annual ACM symposium on Theory of computing,
ACM, 2003, 59-68.


\bibitem{matrixproduct}
Buhrman H, {\v{S}}palek R.
Quantum verification of matrix products.
In: Proceedings of the seventeenth annual ACM-SIAM symposium on Discrete algorithm,
Society for Industrial and Applied Mathematics, 2006, 880-889.


%SIAM Journal on Computing
\bibitem{elementdistinctness}
Ambainis A.
Quantum walk algorithm for element distinctness.
SIAM J Comput, 2007, 37(1): 210-239.


%book
\bibitem{Portugal2018qwsearch}
Portugal R.
Quantum walks and search algorithms[M].
Springer, 2018.


%Science China Information Sciences
\bibitem{yang2018quantum}
Yang Y, Yang J, Zhou Y, et al.
Quantum network communication: a discrete-time quantum-walk approach.
Sci China Inf Sci, 2018, 61(4): 042501.


%arXiv preprint quant-ph/9605043, 1996.
\bibitem{groversearch}
Grover L K.
A fast quantum mechanical algorithm for database search.
In: Proceedings of the twenty-eighth annual ACM symposium on Theory of computing,
ACM, 1996, 212-219.


%Information and Computation
\bibitem{HEMASPAANDRA2002171}
Hemaspaandra E, Hemaspaandra L A, Zimand M.
Almost-everywhere superiority for quantum polynomial time.
Inf Comput, 2002, 175(2): 171-181.


\bibitem{aaronson2003quantum}
Aaronson S, Ambainis A.
Quantum search of spatial regions.
In: 44th Annual IEEE Symposium on Foundations of Computer Science,
IEEE, 2003, 200-209.


%Physical Review A
\bibitem{shenvi2003quantum}
Shenvi N, Kempe J, Whaley K B.
Quantum random-walk search algorithm.
Phys Rev A, 2003, 67(5): 052307.


\bibitem{szegedy2004quantum}
Szegedy M.
Quantum speed-up of Markov chain based algorithms.
In: 45th Annual IEEE Symposium on Foundations of Computer Science,
IEEE, 2004, 32-41.


\bibitem{ambainis2008exceptional}
Ambainis A, Rivosh A.
Quantum walks with multiple or moving marked locations.
In: Proceedings of SOFSEM, 2008, 485-496.


%Quantum Information Processing
\bibitem{ThomasRM17}
Wong T G, Santos R A M.
Exceptional quantum walk search on the cycle.
Quantum Inf Process, 2017, 16(6): 154.


%Physical Review A
\bibitem{Pr2016stationary}
Pr\ifmmode \bar{u}\else \={u}\fi{}sis K, Vihrovs J, Wong T G.
Stationary states in quantum walk search.
Phys Rev A, 2016, 94(3): 032334.


\bibitem{nahimovs2015exceptional}
Nahimovs N, Rivosh A.
Exceptional configurations of quantum walks with Grover's coin.
In: International Doctoral Workshop on Mathematical and Engineering Methods in Computer Science,
Springer, Cham, 2015, 79-92.


\bibitem{nahimovs2017adjacent}
Nahimovs N, Santos R A M.
Adjacent vertices can be hard to find by quantum walks.
In: International Conference on Current Trends in Theory and Practice of Informatics,
Springer, Cham, 2017, 256-267.


%Lobachevskii Journal of Mathematics
\bibitem{khadiev2018probability}
Khadiev K, Nahimovs N, Santos R A M.
On the probability of finding marked connected components using quantum walks.
Lobachevskii J. Math, 2018, 39(7): 1016-1023.


%Lobachevskii Journal of Mathematics
\bibitem{glos2019upperbounds}
Glos A, Nahimovs N, Balakirev K, et al.
Upperbounds on the probability of finding marked connected components using quantum walks.
2019,  ArXiv:1903.01482,


%Quantum Information Processing
\bibitem{wang2017teleportation}
Wang Y, Shang Y, Xue P.
Generalized teleportation by quantum walks.
Quantum Inf Processing, 2017, 16(9): 221.


%Europhysics Letters
\bibitem{shang2019st}
Shang Y, Wang Y, Li M, et al.
Quantum communication protocols by quantum walks with two coins.
EPL, 2019, 124(6): 60009.

\bibitem{shang2019experiment}
Shang Y, Li M.
Experimental realization of state transfer by quantum walks with two coins.
Quantum Sci. Technol., 2019, 5(1): 015005.

%%%%%%%%%%%%%%%%%%%%%%%%%
\bibitem{bednarska2003pla}
Bednarska M, Grudka A, Kurzynski P, et al.
Quantum walks on cycles.
Physics Letters A, 2003, 317(1): 21-25.

\bibitem{solenov2006pra}
Solenov D, Fedichkin L.
Continuous-time quantum walks on a cycle graph.
Phys Rev A, 2006, 73(1).

\bibitem{sadowski2016jpa}
Sadowski P, Miszczak J A, Ostaszewski M, et al.
Lively quantum walks on cycles.
Journal of Physics A, 2016, 49(37).

%%%%%%%%%%%%%%%%%%%%%%%%%

\bibitem{baumgratz2014prl}
Baumgratz T, Cramer M, Plenio M B.
Quantifying coherence.
Phys Rev Lett, 2014, 113: 140401.

\bibitem{shi2017pra}
Shi H, Liu S, Wang X, et al.
Coherence depletion in the Grover quantum search algorithm.
Phys Rev A, 2017, 95(3).

\bibitem{Liu2019Ent}
Liu YC, Shang JW and Zhang XD.
Coherence Depletion in Quantum Algorithms.
Entropy 2019, 21(3), 260.



\end{thebibliography}
\end{document}